\newcommand{\blackcell}{\cellcolor{black} \color{black} 0000} % 自定义黑色单元格命令（需配合colortbl使用）
\newtheorem{definition}{Definition}
\newtheorem{note}{Note}
\lstdefinestyle{CStyle}{ % C语言代码样式定义
  language=C,
  basicstyle=\ttfamily,
  keywordstyle=\color{blue},
  commentstyle=\color{green!50!black},
  stringstyle=\color{red},
  % ... 其他样式设置 ...
}
\newcommand{\thickhline}{%
    \noalign {\ifnum 0=`}\fi \hrule height 2pt
    \futurelet \reserved@a \@xhline
}
\title{Digit-Indexed q-Ary SEC–DED Codes with Near-Hamming Overhead}
\author{Jiaxu Hu, Kenneth J. Roche}
\affil{University of Washington, Seattle, Washington 98195-1560, USA}
\date{\today}
\newcommand{\F}{\mathbb{F}}
\newtheorem{theorem}{Theorem}
\newtheorem{lemma}{Lemma}
\begin{document}

\maketitle

\begin{abstract}
We present a simple $q$-ary family of single-error-correcting, double-error-detecting (SEC--DED) linear codes whose parity checks are tied directly to the base-$p$ ($q{=}p$ prime) digits of the coordinate index. For blocklength $n=p^r$ the construction uses only $r{+}1$ parity checks---\emph{near-Hamming} overhead---and admits an index-based decoder that runs in a single pass with constant-time location and magnitude recovery from the syndromes. Based on the prototype, we develop two extensions: Code A1, which removes specific redundant trits to achieve higher information rate and support variable-length encoding; and Code A2, which incorporates two group-sum checks together with a 3-wise XOR linear independence condition on index subsets, yielding a ternary distance-4 (SEC--TED) variant. Furthermore, we demonstrate how the framework generalizes via $n$-wise XOR linearly independent sets to construct codes with distance $d = n + 1$, notably recovering the ternary Golay code for $n = 5$—showing both structural generality and a serendipitous link to optimal classical codes.

Our contribution is not optimality but \emph{implementational simplicity} and an \emph{array-friendly} structure: the checks are digitwise and global sums, the mapping from syndromes to error location is explicit, and the SEC--TED upgrade is modular. We position the scheme against classical $q$-ary Hamming and SPC/product-code baselines and provide a small comparison of parity overhead, decoding work, and two-error behavior.
\end{abstract}

%Expanding the framework of Hamming codes beyond binary systems, however, reveals new possibilities for error correction in higher-dimensional data and introduces properties not present in binary systems.\\
\section{Introduction}
Error correction codes are fundamental to modern communication systems, ensuring data transmission accuracy by detecting and correcting errors introduced by noisy channels. Among these, the Hamming code is a classical solution capable of correcting any single error and detecting double errors in binary data. Extending Hamming codes to ternary systems takes advantage of the greater symbol efficiency of three-state logic (0, 1, 2). This enables denser data encoding in multi-level storage (e.g., NAND flash) and phase-modulated optical communications, where the binary alphabet is inherently limiting.\\

%------------
This paper presents a study on ternary error-correcting codes, with a focus on constructions derived from digit-indexed parity checks, exploring their unique properties in non-binary systems. Our work begins by constructing a prototype single-error-correcting, double-error-detecting (SEC--DED) code. By extending the parity-check framework, we then introduce modular arithmetic and the novel concept of "inverse index pairs", which leads to a more computationally efficient encoder and decoder. This framework is applied to create the A1 variant code, which achieves approximately 50\% higher information density than the prototype under identical redundancy. Furthermore, by incorporating a 3-wise linearly independent set of indices, we develop the A2 code—a ternary SEC--TED variant with an increased minimum distance of 4.\\

The paper is structured as follows. Section 2 presents the ternary extension of the Hamming code, including a three-dimensional SEC--DED prototype based on digit-indexed parity checks and its generalization to arbitrary prime bases. Section 3 formalizes modular arithmetic principles and the concept of inverse index pairs. Section 4 builds upon this foundation, exploring applications of the inverse-pair structure for variable block lengths, higher code rates, and increased minimum distances; theoretical proofs and algorithms for the A1 and A2 codes are provided to demonstrate their effectiveness. Finally, Section 5 develops a general method for constructing codes with larger minimum distances, using the ternary Golay code as an illustrative example.\\

By extending the Hamming code paradigm through digit-indexed parity and inverse-pair structures, this work provides a simple, efficient, and modular framework for non-binary error correction. We thus advance the design of ternary codes and open up new research directions in higher-dimensional coding. To situate our approach, we next outline applications motivating $q$-ary codes, then position our scheme against established families, provide a comparative summary, and finally state our specific contributions.

\paragraph{Applications (why $q$-ary)} 
Multi-level storage and modulation naturally operate over nonbinary alphabets: e.g., $q{=}3$ trits for ternary memories, $q{=}4$/$8$ symbol alphabets for flash/PCM, or $q$-ary constellations in coded modulation.
In these settings, SEC--DED with extremely simple parity structure and an index-based locator can reduce control logic, latency, and implementation cost.

\paragraph{Positioning and novelty} 
Our SEC--DED family for $n=p^r$ uses $r{+}1$ parities---close to the $r$ parities of the optimal $q$-ary Hamming codes at comparable lengths. 
Unlike Hamming, the parity rows here correspond directly to base-$p$ digits (plus one global sum), which produces a decoder whose locator is an explicit index read-off rather than a table or algebraic search; this also makes an SEC--TED upgrade possible by adding two structured ``group'' checks. 
Compared to SPC/product codes, our scheme has lower redundancy at these lengths and a cleaner single-pass decoder, though it forgoes some of the distance/iterative-decoding advantages of product constructions.
We therefore present this family as a clean implementation point with a modular path to $d{=}4$, not as a new optimal code family.

\paragraph{At-a-glance comparison}
Table~\ref{tab:compare} summarizes parity overhead, decoding work, and two-error behavior for three baselines.
(Here $N$ denotes the total length(block length), $R$ is number of redundant digits, and constants are indicative; exact costs depend on implementation details.)

\begin{table}[h]
\centering
\scriptsize
\setlength{\tabcolsep}{2pt}
\caption{SEC--DED landscape with implementation cost. Example: $N=3^5=243$ ($r=5$) for Proposed/Hamming; $m=16$ ($N=256$) for SPC$\times$SPC.}
\label{tab:compare}
\begin{tabularx}{\linewidth}{@{}%
  >{\raggedright\arraybackslash}p{.15\linewidth}%
  >{\centering\arraybackslash}p{.05\linewidth}%
  >{\centering\arraybackslash}p{.05\linewidth}%
  >{\raggedright\arraybackslash}p{.23\linewidth}%
  >{\raggedright\arraybackslash}p{.20\linewidth}%
  >{\raggedright\arraybackslash}p{.28\linewidth}@{}}
\toprule
Family & N & R & Decoding work & Distance: xEC--yED & Impl.\ cost (example) \\
\midrule
Proposed SEC--DED
& $p^r$
& $r{+}1$ 
& One-pass syndromes; explicit index locator (O(1)) 
& Distance 3: SEC--DED
& $\approx N(r{+}1)$ adds; e.g., $243\times6=1458$ adds; locator O(1) \\
\vspace{0pt}\\
$q$-ary Hamming 
& $\frac{q^r-1}{q-1}$
& $r$ 
& One-pass syndromes; algebraic/table locator (O(1)) 
& Distance 3: SEC--DED
& $\approx Nr$ adds; e.g., $243\times5=1215$ adds; locator O(1) \\
\vspace{0pt}\\
SPC $\times$ SPC 
& $m^2$ 
& $2m-1$ 
& Two passes (row \& column checks); coordinatized locator 
& Distance 4: SEC--TED
& Row $\approx mN$, col.\ $\approx mN$; e.g., $16\times256\times2\approx8192$ adds; locator O(1) \\
\bottomrule
\end{tabularx}
\end{table}

\paragraph{Contributions}
\hspace{0pt}\\
(i) A digit-indexed $q$-ary SEC--DED with $r{+}1$ parities for $N{=}p^r$ and an explicit index-based decoder.\\ 
(ii) A modular SEC--TED variant (ternary) via two group-sum checks and a 3-wise independence condition. \\
(iii) Formal parity-check presentations ($H$ and $H'$) and proofs of distance ($d{=}3$ and $d{\ge}4$).

%------------------------------------------------------------------------------
%------------------------------------------------------------------------------
%\newpage
%------------------------------------------------------------------------------
%------------------------------------------------------------------------------
%------------------------------------------------------------------------------

\section{Prototype Ternary SEC--DED Code}
This study mainly relates to a type of ternary error correction code that is based on the concept of the Hamming code. We briefly introduce the elementary version here, which is a $[27,23,3]_3$ SEC--DED code.

\subsection{Definition and Description}
We consider 27 positions arranged in a $3 \times 3 \times 3$ cube or three $3 \times 3$ matrices with 27 trits(ternary digits). There are four redundant trits $v_{mod3},v_{x_2},v_{x_1},v_{x_0}$ that serve as error correction, located at positions $(0,0,0)$, $(0,0,1)$, $(0,1,0)$, and $(1,0,0)$, corresponding to the $0^{th}$, $1^{st}$, $3^{rd}$, and $9^{th}$ positions.

\begin{figure}[h!]
    \centering
    \begin{minipage}[b]{0.20\textwidth}
        \includegraphics[width=\textwidth]{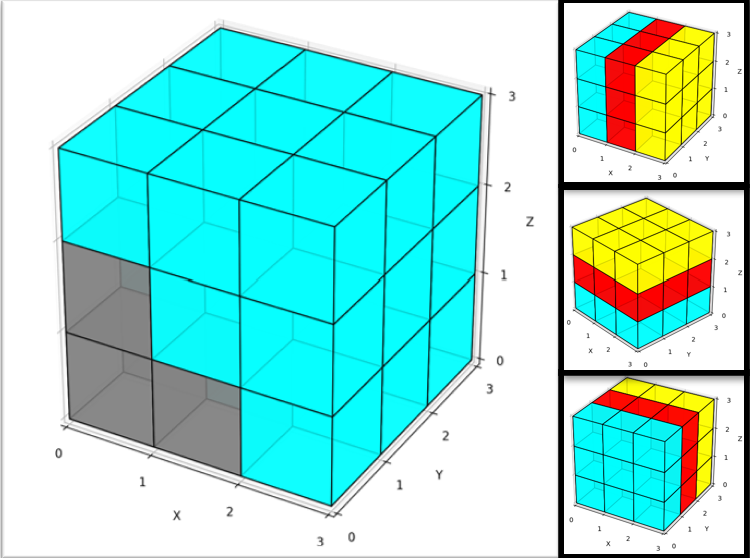}
        \caption{\scriptsize Example cube layout for ternary Hamming code.Blue indicate the $\cdot0$ region, red the $\cdot1$ region, and yellow the $\cdot2$ region.}
        \label{fig:ternary-hamming-cube}
    \end{minipage}
    \hfill
    \begin{minipage}[b]{0.75\textwidth}
        \centering
        \raggedright
        In the cube, we arrange the 23 message trits in the remaining trit blocks, which can take values $\{-1, 0, +1\}$(or equivalently $\{0,1, 2\}$). We now define the values of the redundant trits $v_{x_{0}},v_{x_{1}},v_{x_{2}},v_{mod3}\in\{-1,0,1\}$:(also, $\cdot2$ is equivalent to  $\cdot-1$ under modulo 3)
        \begin{itemize}
            \item $v_{x_0}$ for $(0,0,1)$: s.t. $\sum{v_{(x_2,x_1,1)}} + 2 \cdot \sum{v_{(x_2,x_1,2)}} \equiv 0 \pmod{3}$.
            \item $v_{x_1}$ for $(0,1,0)$: s.t. $\sum{v_{(x_2,1,x_0)}} + 2 \cdot \sum{v_{(x_2,2,x_0)}} \equiv 0 \pmod{3}$.
            \item $v_{x_2}$ for $(1,0,0)$: s.t. $\sum{v_{(1,x_1,x_0)}} + 2 \cdot \sum{v_{(2,x_1,x_0)}} \equiv 0 \pmod{3}$.
            \item $v_{mod3}$ for $(0,0,0)$: s.t. $\sum{v_{(x_2,x_1,x_0)}} \equiv 0 \pmod{3}$.
        \end{itemize}
    \end{minipage}
\end{figure}
    This is how we encode the prototype SEC--DED code.\\
Note: In further part, you will see that this method is equivalent to calculating the sum of the product of the value of each message trits with its index under Addition mod 3. And then set the redundant trits to let such sum back to 0.\\

\subsection{Example}
Assume that we want to record a ternary message of length 23:
\textbf{20,111,020,010,201,200,120,012}.

%\clearpage
%-=-=-=-=-=-=-=-=-=-=-=--=-=-=-=-=-=-=-=-=-=-=-=-=-=-=-=--=-=-=-=-=-=-=-=-=-=-=-=-=-=-=-=
%-=-=-=-=-=-=-=-=-=-=-=--=-=-=-=-=-=-=-=-=-=-=-=-=-=-=-=--=-=-=-=-=-=-=-=-=-=-=-=-=-=-=-=
%-=-=-=-=-=-=-=-=-=-=-=--=-=-=-=-=-=-=-=-=-=-=-=-=-=-=-=--=-=-=-=-=-=-=-=-=-=-=-=-=-=-=-=
%-=-=-=-=-=-=-=-=-=-=-=--=-=-=-=-=-=-=-=-=-=-=-=-=-=-=-=--=-=-=-=-=-=-=-=-=-=-=-=-=-=-=-=
%-=-=-=-=-=-=-=-=-=-=-=--=-=-=-=-=-=-=-=-=-=-=-=-=-=-=-=--=-=-=-=-=-=-=-=-=-=-=-=-=-=-=-=

\begin{algorithm}[H]
\caption{Encode Prototype SEC--DED Code($-1\equiv2$): 20,111,020,010,201,200,120,012}
\begin{algorithmic}[1]
    \State \quad \textbf{Initialize: }Place the message trits in the matrix entries in row-major, skipping indices 0 and power of $3^n$, where $n\in \mathbf{N}_0$, (i.e., 0,1,3,9 in this cases).
    \[
    \begin{bmatrix}
        v_{mod3} & v_{x_0} & -1 \\
        v_{x_1} & 0 & 1 \\
        1 & 1 & 0 \\
    \end{bmatrix}
    \quad
    \begin{bmatrix}
        v_{x_2} & -1 & 0 \\
        0 & 1 & 0 \\
        -1 & 0 & 1 \\
    \end{bmatrix}
    \quad
    \begin{bmatrix}
        -1 & 0 & 0 \\
        1 & -1 & 0 \\
        0 & 1 & -1 \\
    \end{bmatrix}
    \]
    
    \State \quad \textbf{Calculate and Set the  $v_{x_2}$: } Such that $v_{x_2} - 2 \equiv 0 \pmod{3}$. Then, $v_{x_2}=-1$.
    \[
    \begin{bmatrix}
        v_{mod3} & v_{x_0} & -1 \\
        v_{x_1} & 0 & 1 \\
        1 & 1 & 0 \\
    \end{bmatrix}
    \quad
    \textcolor{blue}{
    \begin{bmatrix}
        v_{x_2}=-1 & -1 & 0 \\
        0 & 1 & 0 \\
        -1 & 0 & 1 \\
    \end{bmatrix}
    }
    \quad
    \textcolor{red}{
    \begin{bmatrix}
        -1 & 0 & 0 \\
        1 & -1 & 0 \\
        0 & 1 & -1 \\
    \end{bmatrix}
    }
    \]
    
    \[
    1 \cdot (\textcolor{blue}{v_{x_2} + (-1) + 1 + (-1) + 1}) + 2 \cdot (\textcolor{red}{-1 + 1 + (-1) + 1 + (-1)}) = v_{x_2} - 2
    \]

    \State \quad \textbf{Calculate and Set the  $v_{x_1}$: } Such that $v_{x_1} + 6 \equiv 0 \pmod{3}$. Then, $v_{x_1} = 0$.
    \[
    \begin{bmatrix}
        v_{mod3} & v_{x_0} & -1 \\
        \textcolor{blue}{v_{x_1}=0} & \textcolor{blue}{0} & \textcolor{blue}{1} \\
        \textcolor{red}{1} & \textcolor{red}{1} & \textcolor{red}{0} \\
    \end{bmatrix}
    \quad
    \begin{bmatrix}
        -1 & -1 & 0 \\
        \textcolor{blue}{0} & \textcolor{blue}{1} & \textcolor{blue}{0} \\
        \textcolor{red}{-1} & \textcolor{red}{0} & \textcolor{red}{1} \\
    \end{bmatrix}
    \quad
    \begin{bmatrix}
        -1 & 0 & 0 \\
        \textcolor{blue}{1} & \textcolor{blue}{-1} & \textcolor{blue}{0} \\
        \textcolor{red}{0} & \textcolor{red}{1} & \textcolor{red}{-1} \\
    \end{bmatrix}
    \]

    \[
    1 \cdot (\textcolor{blue}{v_{x_1} + 1 + 1 + 1 +(-1)}) + 2 \cdot (\textcolor{red}{1 + 1 + (-1) + 1 + (-1)}) = v_y + 2 + 2\cdot2 = v_{x_1} +6
    \]

    \State \quad \textbf{Calculate and Set the  $v_{x_0}$: } Such that $v_{x_0} + 1 \equiv 0 \pmod{3}$. Then, $v_{x_0}=-1$.
    \[
    \begin{bmatrix}
        v_{mod3} & \textcolor{blue}{v_{x_0}=-1} & \textcolor{red}{-1} \\
        0 & \textcolor{blue}{0} & \textcolor{red}{1} \\
        1 & \textcolor{blue}{1} & \textcolor{red}{0} \\
    \end{bmatrix}
    \quad
    \begin{bmatrix}
        -1 & \textcolor{blue}{-1} & \textcolor{red}{0} \\
        0 & \textcolor{blue}{1} & \textcolor{red}{0} \\
        -1 & \textcolor{blue}{0} & \textcolor{red}{1} \\
    \end{bmatrix}
    \quad
    \begin{bmatrix}
        -1 & \textcolor{blue}{0} & \textcolor{red}{0} \\
        1 & \textcolor{blue}{-1} & \textcolor{red}{0} \\
        0 & \textcolor{blue}{1} & \textcolor{red}{-1} \\
    \end{bmatrix}
    \]

    \[
    1 \cdot (\textcolor{blue}{v_{x_0} + 1 + (-1) + 1 + (-1) +1}) + 2 \cdot (\textcolor{red}{-1 + 1 +1 +(-1)}) = v_{x_0} +1
    \]

    \State \quad \textbf{Calculate and Set the  $v_{mod3}$: } Such that $v_{mod3} - 1 \equiv 0 \pmod{3}$. Then, $v_{mod3}=1$.
    \[
    \begin{bmatrix}
        v_{mod3}=1 & -1 & -1 \\
        0 & 0 & 1 \\
        1 & 1 & 0 \\
    \end{bmatrix}
    \quad
    \begin{bmatrix}
        -1 & -1 & 0 \\
        0 & 1 & 0 \\
        -1 & 0 & 1 \\
    \end{bmatrix}
    \quad
    \begin{bmatrix}
        -1 & 0 & 0 \\
        1 & -1 & 0 \\
        0 & 1 & -1 \\
    \end{bmatrix}
    \]

    \[
    \sum{v_{(x_2,x_1,x_0)}} = v_{mod3} -1
    \]
    
\State \textbf{End:} \quad This gives us a ternary error‑correcting code that can correct up to one error. When it handles cases with double errors, it will detect that an error exists but incorrectly identify as a single‑error case. Therefore, we call it a single‑error‑correcting, double‑error‑detecting (SEC‑DED) code with a minimum Hamming distance of 3.
\end{algorithmic}
\end{algorithm}
So we get an error correction code: \textbf{122,001,110,220,010,201,100,120,012}. We convert 1 on the eighth trit to 2 as potential noise: \textbf{122,001,1\textcolor{green}{2}0,220,010,201,100,120,012}. The following are processes to decode such error code, fix and get the original message. 
    \[
    \begin{bmatrix}
        \bm{1} & \bm{-1} & -1 \\
        \bm{0} & 0 & 1 \\
        1 & 1 \rightarrow\textcolor{green}{ -1} & 0 \\
    \end{bmatrix}
    \quad
    \begin{bmatrix}
        \bm{-1} & -1 & 0 \\
        0 & 1 & 0 \\
        -1 & 0 & 1 \\
    \end{bmatrix}
    \quad
    \begin{bmatrix}
        -1 & 0 & 0 \\
        1 & -1 & 0 \\
        0 & 1 & -1 \\
    \end{bmatrix}
    \]
\begin{center}
    \textit{Note: Convert 1 on 021 to -1 as potential noise.}
\end{center}

\begin{algorithm}[H]
\caption{Decode Prototype SEC--DED Code: 122,001,120,220,010,201,100,120,012}
\begin{algorithmic}[1]
    \State \quad We start error correction with this code here, assuming there is at most one error.
    
    Define $E = (x_2,x_1,x_0)$ to be the location of the error, with change of the value $\Delta v_E$.

    \State \quad \textbf{Check the $v_{mod3}$ Region to Determine $\Delta v_E$:}
    \[
    \sum{v_{(x_2,x_1,x_0)}} = -2 \equiv 1 \pmod{3} =\Delta v_E
    \]
Thus, there are 3 possible cases for such $V_E = 1/-2$:
\begin{itemize}
    \item $\sum{v_{(x_2,x_1,x_0)}}$ +1 $\equiv$ $1 \pmod{3}$ due to $-1 \to 0$ or $0 \to 1$.
    \item $\sum{v_{(x_2,x_1,x_0)}}$ -2 $\equiv$ $1 \pmod{3}$ due to $1 \to -1$.
    \item There are two or more errors, but we only consider cases with at most one error.
\end{itemize}

    \State \quad \textbf{Check the $v_{x_2}$ region to Determine the $x_2$ coordinate of the E:}
    \[
    \sum v_{(1,x_1,x_0)} + 2 \cdot \sum v_{(1,x_1,x_0)} \equiv -1 + 2 \cdot (-1) = -3 \pmod{3} \equiv 0 \pmod{3}.
    \]
Thus, the mistake is neither located at $(1,x_1,x_0)$ nor $(2,x_1,x_0)$, which must be $(0,x_1,x_0)$.

    \State \quad \textbf{Check the $v_{x_1}$ region to Determine the $x_1$ coordinate of the E:}
    \[
    \sum{v_{(x_2,1,x_0)}} + 2 \cdot \sum{v_{(x_2,2,x_0)}} = 2 + 2\cdot0 = 2 \equiv 2 \pmod{3}.
    \]
As designed, we previously have $\sum{v_{(x_2,1,x_0)}} = A$ and $\sum{v_{(x_2,2,x_0)}} = B$, such that 
    \[
    (A + 2 \cdot B)\equiv 0 \pmod{3}
    \]

Thus, there are four possible cases for $v_E$:
\begin{itemize}
    \begin{minipage}{0.45\textwidth}
    \item $\big((A+2) + 2 \cdot B\big) \equiv 2 \pmod{3}$
    \item $\big((A-1) + 2 \cdot B\big) \equiv 2 \pmod{3}$
    \end{minipage}%
    \begin{minipage}{0.45\textwidth}
    \item $\big(A + 2 \cdot (B+1)\big) \equiv 2 \pmod{3}$
    \item $\big(A + 2 \cdot (B-2)\big) \equiv 2 \pmod{3}$
    \end{minipage}
\end{itemize}

Since $v_{E} = 1/-2$, the mistake is in region $B$, which is $(x_2,2,x_0)$. Combine with the $x_2$ we get above, we now know the error is at $(0,2,x_0)$.

    \State \quad \textbf{Check the $v_{x_0}$ region to Determine the $x_0$ coordinate of the E:}
    \[
    \sum{v_{(x_2,x_1,1)}} + 2 \cdot \sum{v_{(x_2,x_1,2)}} = -2 + 2\cdot0 = -2 \equiv 1 \pmod{3}.
    \]
As designed, we previously have $\sum{v_{(x_2,x_1,1)}} = A$ and $\sum{v_{(x_2,x_1,2)}} = B$, such that 
    \[
    (A + 2 \cdot B) \pmod{3}\equiv 0 \pmod{3}
    \]
Thus, there are four possible cases for $v_E$:

\begin{itemize}
    \begin{minipage}{0.45\textwidth}
          \item   $\big((A+1) + 2 \cdot B\big) \equiv 1 \pmod{3}$
          \item   $\big(A + 2 \cdot (B+2)\big) \equiv 1 \pmod{3}$
    \end{minipage}%
    \begin{minipage}{0.45\textwidth}
            \item  $\big((A-2) + 2 \cdot B\big) \equiv 1 \pmod{3}$
            \item  $\big(A + 2 \cdot (B-1)\big) \equiv 1 \pmod{3}$
    \end{minipage}
\end{itemize}
Since $v_{E} = 1/-2$, the mistake is region $A$, which is $(x_2,x_1,1)$.

\State \textbf{Fix the Mistake:}
    \[
    \begin{bmatrix}
        \bm{1} & \bm{-1} & -1 \\
        \bm{0} & 0 & 1 \\
        1 & \textcolor{green}{+1} & 0 \\
    \end{bmatrix}
    \quad
    \begin{bmatrix}
        \bm{-1} & -1 & 0 \\
        0 & 1 & 0 \\
        -1 & 0 & 1 \\
    \end{bmatrix}
    \quad
    \begin{bmatrix}
        -1 & 0 & 0 \\
        1 & -1 & 0 \\
        0 & 1 & -1 \\
    \end{bmatrix}
    \]
Therefore, the location of the mistake is $(0,2,1)$. Since $v_{(0,2,1)} = -1$, by $v_{mod3}$, we know the correct value should be $+1$.
\end{algorithmic}
\end{algorithm}
So, the correct code supposes to be: \textbf{12}2,\textbf{0}01,110,\textbf{2}20,010,201,200,120,012. Then, after removing all redundant trits on 0th, 1st, 3rd, 9th position, we get the original message: \textbf{20,111,020,010,201,200,120,012}, which is the same as the message we pick for algorithm 1.

%\clearpage
%-=-=-=-=-=-=-=-=-=-=-=--=-=-=-=-=-=-=-=-=-=-=-=-=-=-=-=--=-=-=-=-=-=-=-=-=-=-=-=-=-=-=-=
%-=-=-=-=-=-=-=-=-=-=-=--=-=-=-=-=-=-=-=-=-=-=-=-=-=-=-=--=-=-=-=-=-=-=-=-=-=-=-=-=-=-=-=
%-=-=-=-=-=-=-=-=-=-=-=--=-=-=-=-=-=-=-=-=-=-=-=-=-=-=-=--=-=-=-=-=-=-=-=-=-=-=-=-=-=-=-=
%-=-=-=-=-=-=-=-=-=-=-=--=-=-=-=-=-=-=-=-=-=-=-=-=-=-=-=--=-=-=-=-=-=-=-=-=-=-=-=-=-=-=-=
%-=-=-=-=-=-=-=-=-=-=-=--=-=-=-=-=-=-=-=-=-=-=-=-=-=-=-=--=-=-=-=-=-=-=-=-=-=-=-=-=-=-=-=

\subsection{General Base-Prime SEC--DED Code}
This method works not only for ternary codes, but also for all prime numbers x-base. With n+1 redundant "digits", it supports storage of \( x^{n} - n-1 \) message digits within \( x^{n} \) position in total, enabling correction of up to one error and detection of up to two errors. This results in a linear code of the form $[x^{n},x^{n}-n-1,3]_x$, or equivalently, $[x^{n-1},x^{n-1}-n,3]_x$.
\subsubsection{General Description}
\begin{algorithm}[H]
\caption{General Coding}
\begin{algorithmic}[1]
    \State \textbf{"Assign indices from 0 to $x^n-1$ to the total $x^n$ positions."}
    \State \textbf{Filling \( x^n-n-1 \) message trits in $x^n$ position}:
       \begin{itemize}
       \item Skip the 0th position and all \( x^i \)-th positions for each \( i \in \mathbb{N}_0 \) with \( i < n \); These are reserved for redundant trits.
       \item Place all message trits in the remaining positions.
   \end{itemize}

    \State \textbf{Setting Values for Redundant trits with index \( x^i \)}:
   \begin{itemize}
            \item Call this redundant trit $R_{i}$ with value $v_{x^i}$.
            \item Rewrite \( x^i \) in base \( x \), as $x_{n-1},...,x_{i+1},\textcolor{red}{x_{i}},x_{i-1}...,x_0$ = $0\cdots0\textcolor{red}{1}0\cdots0$.
            \item Set the value $v_{x_i}$ for $R_i$ such that
            
            \[
            \left[ \sum_{l=1}^{x-1} l \cdot \sum  v_{(x_{n-1},\cdots,x_{i+1}, l, x_{i-1},\cdots, x_{0} )}\right]  \mod x = 0.
            \]
            $\sum  v_{(x_{n-1},\cdots,x_{i+1}, l, x_{i-1},\cdots, x_{0} )}$ is the sum of all $x$ with $i^{th}$ place of its index equals to $l$.
            
           (This sum is influence by $v_{x_i}$, since $v_{x_i} = v_{(0,\cdots,0,1,0,\cdots,0)}$ with the coefficient l=1 )
   \end{itemize}
    \State \textbf{Setting Value of the 0th position($v_{modx}$)}:
       \begin{itemize}
       \item Set the value of the 0th position such that $\sum_{k=0}^{x^n-1} v_{k}  \mod  x = 0$

   \end{itemize}
\end{algorithmic}
\end{algorithm}

\subsubsection{Proof}
Here we prove that the code generated by Algorithm 3 is single error correction(SEC).
\begin{proof}
   Assume that there is exactly one "digit" error that has become any other integer from 0 to \( x-1 \), called them $v_i^{\text{flaw}}$ to distinguish with original $v_i$ .\\
First, check the sum of all digits. \\
Recall
    \[
    \left( \sum v_i \right) \mod x = 0
    \]
Assume 
\[
\left( \sum v_i^{\text{flaw}} \right) \mod x = y
\]

where 
\( y \neq 0, y \in \mathbb{Z}^+, y<x \). Thus, we can determine that the change at the erroneous position must be either \( +y \) or \(-(x-y)\) compared with the original correct value.

Next, calculate the bias sum for the \( x^i \)-th region:
          \[
           \left[ \sum_{l=1}^{x-1} l \cdot \sum  v^{\text{flaw}}_{(x_{n-1},\cdots,x_{i+1}, l, x_{i-1},\cdots, x_{0} )}\right]  \mod x = y_i
           \]

Recall that this sum divides all $x^n$ digits into x-1 groups(by the leftmost summation), where each group contains all numbers for which the $i^{th}$ position of index(in base-x) is equal to l(i.e., indices of the form xx...l...xx).\\

For the sake of simplicity, define the sum of each group as $\sigma_j$ and $\sigma_j^{\text{flaw}} = j \cdot \sum  v^{\text{flaw}}_{(x_{n-1},\cdots,x_{i+1}, j, x_{i-1},\cdots, x_{0} )}$,
Recall
    \[
    [1 \cdot \sigma_1 + 2 \cdot \sigma_2 + \cdots + (n-1) \cdot \sigma_{n-1}] \mod x = 0.
    \]
Assume
    \[
    [1 \cdot \sigma_1^{\text{flaw}} + 2 \cdot \sigma_2^{\text{flaw}} + \cdots + (n-1) \cdot \sigma_{n-1}^{\text{flaw}}] \mod x = y_i.
    \]
define $\Delta\sigma_j = \sigma_j^{\text{flaw}} - \sigma_j$
    \[
    [1 \cdot \Delta\sigma_1 + 2 \cdot \Delta\sigma_2 + \cdots + (n-1) \cdot \Delta\sigma_{n-1}] \mod x = y_i.
    \]
Since we have exactly one mistake, such a mistake will be located in one group only.\\
    \[
    \exists! \, j \in \{1, 2, \ldots, n-1\} \ \text{such that} \ \Delta\sigma_j \neq 0
    \]
Let r denote the index of this group. Removing all other terms (where $\Delta\sigma_j=0$), we have:
    \[
    [r \cdot \Delta\sigma_r] \mod x = [r \cdot y] \mod x= y_i.
    \]
Meanwhile, since $\Delta\sigma_r \neq 0$, it is the only error. Then $\Delta\sigma_r = y$ (or $-(x - y)$, as they correspond to the same coefficient and yield the same residue).
Consider the Theorem 2.1(proved on the next page):\\
For \( x \) as a prime, \( x, y, z \in \mathbb{Z}^+ \), \( y, z < x \), there exists exactly one \( l \) such that 
    \[
    l \cdot z \mod x = y,
    \]
where \( l \in \mathbb{Z}^+ \), and \( l < x \).

By this theorem, there are exactly one value of l valid for r, thus this method could detect the error is in which group that split by $x^i$.
so, we could lock the i-th position of index of the error digits under base x. 
Iterate those step for all positions, and we may get the specific location of such error in base-x. Using \( y \), we can get its original value.
\end{proof}

\begin{theorem}
Assume \( x \) is a prime, given \( x, y, z \in \mathbb{Z}^+ \), \( y, z < x \). Prove that there exists exactly one \( l \) such that
    \[
    l \cdot z \mod x = y, 
    \]
for \( l \in \mathbb{Z}^+ \) s.t. \( l < x \).

\end{theorem}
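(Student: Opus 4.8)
The plan is to recognize this as the standard fact that multiplication by a nonzero residue is a bijection modulo a prime. Since $x$ is prime and $0 < z < x$, the integer $z$ is not a multiple of $x$, so $\gcd(z,x)=1$; this is the only place the primality hypothesis is genuinely used, and it is essential (for composite $x$ the claim fails, e.g.\ $x=4$, $z=2$ gives $l\cdot z \bmod x \in \{0,2\}$ only, missing odd targets and repeating even ones).

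For uniqueness, I would argue by cancellation. Suppose $l_1 z \equiv l_2 z \pmod{x}$ with $l_1, l_2 \in \{1,\dots,x-1\}$. Then $x \mid (l_1 - l_2)z$, and since $x$ is prime with $x \nmid z$, Euclid's lemma forces $x \mid (l_1 - l_2)$. Because $|l_1 - l_2| < x$, this yields $l_1 = l_2$. Hence the map $l \mapsto lz \bmod x$ is injective on $\{1,\dots,x-1\}$.

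For existence, I would combine injectivity with a counting argument. The map sends $\{1,\dots,x-1\}$ into the nonzero residues $\{1,\dots,x-1\}$: the image avoids $0$ because $lz \equiv 0 \pmod{x}$ would again force $x \mid l$, impossible for $1 \le l \le x-1$. An injective self-map of a finite set is a bijection, so every target $y$ with $0 < y < x$ is attained by exactly one $l$, which is precisely the claim.

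Alternatively, one can make existence constructive via B\'ezout: from $\gcd(z,x)=1$ there are integers $a,b$ with $az + bx = 1$, so $a$ is an inverse of $z$ modulo $x$ and $l \equiv ay \pmod{x}$ solves the congruence explicitly. The main obstacle---such as it is---is simply making the primality hypothesis do its work cleanly through Euclid's lemma; everything else is finite pigeonhole and cancellation bookkeeping. This is also exactly the step invoked in the SEC proof above, where it guarantees the coefficient $r$ recovers a unique group index.
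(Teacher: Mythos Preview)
Your proposal is correct and follows essentially the same two-step argument as the paper: uniqueness via Euclid's lemma (the paper does this cancellation step more by hand, splitting into $x\mid(l_1-l_2)$ or $x\mid z$), then existence by a pigeonhole/counting argument that leverages the injectivity just established. Your version is a bit tidier in that you explicitly note the image avoids $0$ before invoking the finite-bijection step, and you add the optional B\'ezout construction, but the underlying strategy matches the paper's proof.
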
 
\begin{proof}
\hspace{0pt} \\
$\Rightarrow$ To prove the uniqueness of $l$, for the sake of contradiction, assume that there exist two \( l \) satisfying such that \( l_1 \cdot z \mod x = y \) and \( l_2 \cdot z \mod x = y \). Thus,
\begin{equation}
\left\{
\begin{aligned}
l_1 \cdot z &= k_1 \cdot x + y \\
l_2 \cdot z &= k_2 \cdot x + y
\end{aligned}
\right.
\end{equation}

for some integers $k_1$ and $k_2$, while $l_1 \neq l_2$. Subtracting them, and rearranging terms to let x be the denominator, we obtain:
\begin{equation}
\begin{aligned}
\frac{(l_1 - l_2) \cdot z}{x} = (k_1 - k_2) \in \mathbb{Z}
\end{aligned}
\end{equation}
Since the quotient of $(l_1 - l_2) \cdot z$ divided by x is an integer, it must be either $x\mid (l_1 - l_2)$ or $x\mid z$. As x is prime, this rules out the possibility of both $(l_1 - l_2)$ and z each contain only part of x's factors.\\

Assume $x\mid (l_1 - l_2)$. Since $l_1,l_2<x$, $|l_1-l_2|<x$. Thus, by definition of divisor, $m\cdot x = l_1 - l_2$, with $\exists m\in\mathbb{Z} \land |m|<1$, which implies that m=0 and $l_1 - l_2 = 0$. 
Contradict with $l_1 \neq l_2$.\\

Now assume $x\mid z$, since $z<x$ and $z \in \mathbb{Z}^+$, this implies that $z=0$, but $0\notin \mathbb{Z}^+$. Thus, there is no z satisfying $x\mid z$.\\

Therefore, we reach a contradict in both cases, two equations in (1) could not hold at the same time with $l_1 \neq l_2$. So, we get that there are at most one $l$.\\

\noindent$\Leftarrow$ To prove existence of $l$: for the sake of contradiction, assume that for some $x,y_0,z$ that $\nexists l$ that satisfy $l \cdot z \mod x = y_0$. Now consider there are (x-1) valid value for $l\in \{1,2,\cdots,x-1\}$. For each $l_i$, there are some $y_i$ such that $l_i \cdot z \mod x = y_i$, and $y_i \neq y_0$. Thus $y_i$ only have (x-2) valid value that $y_i\in \{1,2,\cdots,x-1\}\setminus{y_0}$, by pigeonhole principle, there must be a $y_i$ that corresponding with $l_i$ and $l_j$ that $l_i \neq l_j$, which contradict with $\Rightarrow$.\\

Thus, by combining existence and uniqueness, we concluding that there exists exactly one $l$ satisfying $l\cdot z \mod{x} = y$.
\end{proof}

%---------------------------------------------------------------------

% Here we delete the content of Chessboard problem

%---------------------------------------------------------------------

\section{Parity Check XOR Method}
\subsection{Conception of Pair and XOR-Modulo 3}
Before we introduce the method to apply the ternary parity check using the XOR method, there is an important definition that should be introduced first.

\begin{definition}
\hspace{0pt}

An \textbf{Addition Modulo p} $\oplus_p$ is a type of addition where each trit of the result undergoes a modulo p operation instead of carrying over to the next position. For the sake of simplicity, we refer to it as XOR base $p$ or simply \textcolor{red}{\textbf{XOR}}. For any $A,B\in \mathbb{Z}$, rewritten them in base-p with same length k+1, value of k depends on the length of the one with larger absolute value.\\
    \[
    A = a_k a_{k-1} \cdots a_1 a_0, \quad B = b_k b_{k-1} \cdots b_1 b_0,
    \]
    \[
    A \oplus_p B = C = c_k c_{k-1} \cdots c_1 c_0, \quad \text{s.t. } c_i = (a_i + b_i) \mod p, \quad i=0,1,\ldots,k.
    \]

It satisfies the following properties:
\begin{enumerate}
    \item \textbf{Closure:} For all \( a, b < (p)_{10}^k \), the result of \( a \oplus_p b \) is also less than \( (p)_{10}^k\).
    \item \textbf{Commutativity:} For all \( a, b \in \mathbb{Z} \), \( a \oplus_p b = b \oplus_p a\).
    \item \textbf{Associativity:} For all \( a, b, c \in \mathbb{Z} \), \( (a \oplus_p b) \oplus_p c = a \oplus_p (b \oplus_p c) \).
    \item \textbf{Identity Element:} There exists an element \( 0 \in \mathbb{Z} \) such that for all \( a \in \mathbb{Z} \), \( 0 \oplus_p a = a \oplus_p 0 = a \).
    \item \textbf{Inverse Element:} For each \( a \in \mathbb{Z} \), there exists an element \( b \in \mathbb{Z}\) such that \( a \oplus_p b = b \oplus_p a = 0 \), where \( 0 \) is the identity element. Here we call $a$ and $b$ a \textbf{pair}, and they are \textbf{addition modulo $p$ inverses} of each other. For the sake of simplicity, we refer to them as \textbf{inverse}. With this property, under addition modulo p, we can say that a = -b.
\end{enumerate}
\end{definition}

\begin{note}
\textbf{Special Multiplication:} To simplify the notation, define a multiplication-like operation:
    \[
    r \cdot a \text{ means that } a \text{ is XOR (for its base) with itself } r \text{ times}, \quad r \in \mathbb{N}.
    \]
Typically, \( r \) is written in base-10 format.
%And usually $r\cdot a\neq a\cdot r$, but this distinction is not relevant for this article.
\end{note}

\subsection{Application: Index and XOR in Prototype SEC--DED Code}
Parity check can also be performed using the XOR method.
Take the $[27,23,3]_3$ error correction code in section 1 again:
    \[
    \small
    \begin{bmatrix}
        000 & 001 & 002 \\
        010 & 011 & 012 \\
        020 & 021 & 022 \\
    \end{bmatrix}
    \quad
    \begin{bmatrix}
        100 & 101 & 102 \\
        110 & 111 & 112 \\
        120 & 121 & 122 \\
    \end{bmatrix}
    \quad
    \begin{bmatrix}
        200 & 201 & 202 \\
        210 & 211 & 212 \\
        220 & 221 & 222 \\
    \end{bmatrix}
    \xrightarrow{\text{Filled Value}}
    \begin{bmatrix}
        v_{mod3} & v_{x_0} & 2 \\
        v_{x_1} & 0 & 1 \\
        1 & 1 & 0 \\
    \end{bmatrix}
    \quad
    \begin{bmatrix}
        v_{x_2} & 2 & 0 \\
        0 & 1 & 0 \\
        2 & 0 & 1 \\
    \end{bmatrix}
    \quad
    \begin{bmatrix}
        2 & 0 & 0 \\
        1 & 2 & 0 \\
        0 & 1 & 2 \\
    \end{bmatrix}
    \]
Here, instead of calculating regions for $v_{x_0},v_{x_1},v_{x_2}$ separately, we calculating the current $P_{message} = \oplus V_i\cdot(i)_3$.\\
    \[
    P_{message} = \bigoplus_{message} v_i \cdot (i)_3 =
    \begin{aligned}
        & 2 \cdot 002 \oplus 1 \cdot 012 \oplus 1 \cdot 020 \oplus 1 \cdot 021  \oplus 2 \cdot 101 \oplus 1 \cdot 111  \oplus 2 \cdot 120   \\
        & \oplus 1 \cdot 122 \oplus 2 \cdot 200 \oplus 1 \cdot 210 \oplus 2 \cdot 211 \oplus 1 \cdot 221 \oplus 2 \cdot 222
    \end{aligned}
    \begin{aligned}
    = \bf{101}
    \end{aligned}
    \]
Recall $v_{x_0},v_{x_1},v_{x_2}$ have index $001,010,100$. Pick values for them so that:\\
    \[
    P_{all} = 000 = \bigoplus v_i \cdot (i)_3 = \bigoplus_{message} v_i \cdot (i)_3 \oplus (v_{x_0}\cdot 001) \oplus (v_{x_1} \cdot 010)  \oplus (v_{x_2}\cdot 100) = 101 \oplus v_{x_2}v_{x_1}v_{x_0}
    \]
Thus $v_{x_2}v_{x_1}v_{x_0}$ is the inverse of 101, which is 202.\\
    \[
    \begin{bmatrix}
        v_{mod3}=1 & \textcolor{red}{2} & 2 \\
        \textcolor{red}{0} & 0 & 1 \\
        1 & 1 & 0 \\
    \end{bmatrix}
    \quad
    \begin{bmatrix}
        \textcolor{red}{2} & 2 & 0 \\
        0 & 1 & 0 \\
        2 & 0 & 1 \\
    \end{bmatrix}
    \quad
    \begin{bmatrix}
        2 & 0 & 0 \\
        1 & 2 & 0 \\
        0 & 1 & 2 \\
    \end{bmatrix}
    \]
And last step of setting the value for $v_{mod3}$ is exactly the same with usual, set the sum of all value modulo 3(denoted $P_{mod3}$) equals to 0. And now we have a 3*3*3 matrix whose XOR sum has been reset to 000.\\

When exactly one error occurs anywhere, the XOR sum will shift to the value of the index of the error position or the inverse of that index. $P_{mod3}$ then helps determine the correct case. Thus, this provides a more efficient way to explain how our $[3^{r-1},3^{r-1}-r,3]_3$ linear code works.

\section{Refine Code Designs}
Our code can be improved in two distinct directions. The first, termed $\text{A1}$ (Alternative Version 1), removes the special redundant trit $O$ to increase the information rate and support adaptive-length encoding. The second, $\text{A2}$ (Alternative Version 2), introduces an additional special redundant trit $E$ to extend the minimum Hamming distance to 4 (SEC--TED), while also supporting adaptive-length encoding.

\subsection{A1: Higher Information Rate and Adaptive-Length Ternary Code}
Previously, for ternary codes, we only considered the cases where the length of the valid message was in the form of \( 3^n - n - 1 \). Now, utilizing the properties of pairs, we can also allow lengths that do not follow this form. In other words, we can construct a \([ \frac{3^{r}-1}{2}, \frac{3^{r}-1}{2} - r, 3 ]_3\) linear code( r is the number of the redundant trits) A1 that accommodates messages of arbitrary lengths.

\subsubsection{Coding}
Given any ternary message of length \( m \in \mathbb{Z}^+ \), we need n redundant trits, and there exists some \( n \in \mathbb{Z}^+ \) s.t.:
    \[
    m \in \left( g(n-1), g(n) \right],  \quad \quad \text{where } g(r) = \frac{3^r - 1}{2} - r
    \]
Then, a message of length \( m \) would be encoded into a coded message of length \( m+n \) using the following procedure:
\begin{enumerate}
    \setlength\itemsep{-0.3em} % Adjust spacing between items
    \item Create a matrix of size \( 3^n \).
    \item \textbf{Banish} 0 and any index whose most significant nonzero trit equals 2. This ensures that we retain only one member from each pair (since each must correspond to a number that starts with 1).
    \item For rest of the indices, skip indices with the form \( 3^i \) for each \( i \in \mathbb{N}_0 \) with \( i < n \)(these are reserved for redundant trits); And fill in the message in the remaining entries that have not been banished.
    \item Banish any remaining entries that are not in the form of \( 3^i \) (in case the message is not long enough).
    \item Set the values of the redundant trits at \( 3^i \), so that the XOR sum of the current matrix, $P_{all}$, is zero.
\end{enumerate}

\subsubsection{Example: n=3}
Assume that we want to store a ternary message with length 10: "0,211,112,102". Since it is size $10\in \left(g(2),g(3)\right] = \left(2,1 0\right]$, we pick n=3, and arrange it in the form $3\times3\times3$.
    \[
    \small
        \begin{bmatrix}
            \bcancel{000} & 001 & \bcancel{002} \\
            010 & 011 & 012 \\
            \bcancel{020} & \bcancel{021} & \bcancel{022} \\
        \end{bmatrix}
        \quad
        \begin{bmatrix}
            100 & 101 & 102 \\
            110 & 111 & 112 \\
            120 & 121 & 122 \\
        \end{bmatrix}
        \quad
        \begin{bmatrix}
            \bcancel{200} & \bcancel{201} & \bcancel{202} \\
            \bcancel{210} & \bcancel{211} & \bcancel{212} \\
            \bcancel{220} & \bcancel{221} & \bcancel{222} \\
        \end{bmatrix}
    \rightarrow
    \small
    \begin{bmatrix}
        \blackcell & \bm{v_x} & \blackcell \\
        \bm{v_y} & 0 & 2 \\
        \blackcell & \blackcell & \blackcell \\
    \end{bmatrix}
    \quad
    \begin{bmatrix}
        \bm{v_{z}} & 1 & 1 \\
        1 & 1 & 2 \\
        1 & 0 & 2 \\
    \end{bmatrix}
    \quad
    \begin{bmatrix}
        \blackcell & \blackcell & \blackcell \\
        \blackcell & \blackcell & \blackcell \\
        \blackcell & \blackcell & \blackcell \\
    \end{bmatrix}
    \]
And then assign for those redundant trits.
    \[
    \begin{bmatrix}
        \blackcell & 2 & \blackcell \\
        0 & 0 & 2 \\
        \blackcell & \blackcell & \blackcell \\
    \end{bmatrix}
    \quad
    \begin{bmatrix}
        0 & 1 & 1 \\
        1 & 1\rightarrow \textcolor{red}{0} & 2 \\
        1 & 0 & 2 \\
    \end{bmatrix}
    \quad
    \begin{bmatrix}
        \blackcell & \blackcell & \blackcell \\
        \blackcell & \blackcell & \blackcell \\
        \blackcell & \blackcell & \blackcell \\
    \end{bmatrix}
    \]
\begin{center}
    \textit{Note: Convert 1 on 111 to 0 as potential noise.}
\end{center}
Here we pick values for $v_{x_2}v_{x_1}v_{x_0}$ such that the XOR sum of all trits with their values, $P_{all}$, would equal 000. (Here they should be 002, as the XOR sum of the message with their values, $P_{message}$, equal to 001).\\

Now, convert 1 on 111 to 0 as potential noise. Calculating the $P_{all}$ of the current code; it equals 222. This result could be generated by either +1/-2 happened on 222, or -1/+2 happened on 111. Since we do not have a trit representing 222, it could only be 111 here. So, we do the error correction to convert such 0 on 111 back to 1, among 10 valid trits and 3 redundant trits.\\

\subsubsection{Example: n=4}
Furthermore, if we want to store 11 valid trits(let it be "02,111,121,020", which adds a trit 0 at the end of the previous one). It is size $11\in \left(g(3),g(4)\right] = \left(10, 36 \right]$. We have n=4, and arrange it in the form $3\times3\times3\times3$.\\
    \[
    \small
    \begin{bmatrix}
        \blackcell & \bm{v_{0}} & \blackcell \\
        \bm{v_1} & 0 & 2 \\
        \blackcell & \blackcell & \blackcell \\
    \end{bmatrix}
    \quad
    \begin{bmatrix}
        \bm{v_2} & 1 & 1 \\
        1 & 1 & 2 \\
        1 & 0 & 2 \\
    \end{bmatrix}
    \quad
    \begin{bmatrix}
        \blackcell & \blackcell & \blackcell \\
        \blackcell & \blackcell & \blackcell \\
        \blackcell & \blackcell & \blackcell \\
    \end{bmatrix}
    \]
    \[
    \small
    \begin{bmatrix}
        \bm{v_3} & 0 & \bcancel{1002} \\
        \bcancel{1010} & \bcancel{1011} & \bcancel{1012} \\
        \bcancel{1020} & \bcancel{1021} & \bcancel{1022} \\
    \end{bmatrix}
    \quad
    \begin{bmatrix}
        \bcancel{1100} & 1\bcancel{1101} & \bcancel{1102} \\
        \bcancel{1110} & \bcancel{1111} & \bcancel{1112} \\
        \bcancel{1120} & \bcancel{1121} & \bcancel{1122} \\
    \end{bmatrix}
    \quad
    \begin{bmatrix}
        \bcancel{1200} & \bcancel{1201} & \bcancel{1202} \\
        \bcancel{1210} & \bcancel{1211} & \bcancel{1212} \\
        \bcancel{1220} & \bcancel{1221} & \bcancel{1222} \\
    \end{bmatrix}
    \]
    \[
    \small
    \begin{bmatrix}
        \blackcell & \blackcell & \blackcell \\
        \blackcell & \blackcell & \blackcell \\
        \blackcell & \blackcell & \blackcell \\
    \end{bmatrix}
    \quad
    \begin{bmatrix}
        \blackcell & \blackcell & \blackcell \\
        \blackcell & \blackcell & \blackcell \\
        \blackcell & \blackcell & \blackcell \\
    \end{bmatrix}
    \quad
    \begin{bmatrix}
        \blackcell & \blackcell & \blackcell \\
        \blackcell & \blackcell & \blackcell \\
        \blackcell & \blackcell & \blackcell \\
    \end{bmatrix}
    \]
\begin{center}
    \textit{Note: Those positions filled with index will be used for longer message, but banish for current length.
    %Banished entries could recognize as filled with value 0.
    }
\end{center}
%Note 1: Before transport your code, please contact with your target about which method they need to follow for decode.\\
%Banished entries could recognize as filled with value 0.

%\textcolor{red}{Need to explain something like how to construct such location?}

\subsection{A2: Minimum Hamming Distance 4, SEC-TED Code}
Recall that for the shrink ternary code, it will work as long as there is at most one entry from each pair that has been selected. The minimum Hamming distance of the code is 3, this can be improved if the entries are selected according to certain rules, using the idea of linearly independent in XOR.
\begin{definition}
\hspace{0pt}
\textbf{3-Wise XOR Linearly Independent Set}\\
Let set L is a 3-Wise XOR Linearly Independent Set, if it satisfy the following property:\\
Pick any $A,B,C \in L$, such that $ A\neq B\neq C\neq A$. Then:
    \[
    \forall\, \alpha, \beta, \gamma \in \{0,1,2\}, \quad 
    \alpha A \oplus \beta B \oplus \gamma C = 0 
    \iff 
    \alpha = \beta = \gamma = 0.
    \]
Equivalently, pick any $M\subset L$, and $|M| = 3$, then all elements in M are linearly independent under XOR operation.
\end{definition}
\subsubsection{Choices of Indices and Redundant Trits}
Here is the layout of the $[22,16,4]_3$ linear code A2:\\
\begin{minipage}[t]{0.60\textwidth}
    \[
    \begin{bmatrix}
        \blackcell & \blackcell & \blackcell \\
        \blackcell & \textcolor{red}{0011} & \blackcell \\
        \blackcell & \blackcell & 0022 \\
    \end{bmatrix}
    \quad
    \begin{bmatrix}
        \blackcell & 0101 & \blackcell \\
        \textcolor{red}{0110} & \textcolor{red}{0111} & \blackcell \\
        \blackcell & \blackcell & \blackcell \\
    \end{bmatrix}
    \quad
    \begin{bmatrix}
        \blackcell & \blackcell & 0202 \\
        \blackcell & \blackcell & \blackcell \\
        0220 & \blackcell & 0222 \\
    \end{bmatrix}
    \]
    \[
    \begin{bmatrix}
        \blackcell & 1001 & \blackcell \\
        1010 & 1011 & \blackcell \\
        \blackcell & \blackcell & \blackcell \\
    \end{bmatrix}
    \quad
    \begin{bmatrix}
        1100 & 1101 & \blackcell \\
        \textcolor{red}{1110} & \blackcell & \blackcell \\
        \blackcell & \blackcell & \blackcell \\
    \end{bmatrix}
    \quad
    \begin{bmatrix}
        \blackcell & \blackcell & \blackcell \\
        \blackcell & \blackcell & \blackcell \\
        \blackcell & \blackcell & \blackcell \\
    \end{bmatrix}
    \]
    
    \[
    \begin{bmatrix}
        \blackcell & \blackcell & 2002 \\
        \blackcell & \blackcell & \blackcell \\
        2020 & \blackcell & 2022 \\
    \end{bmatrix}
    \quad
    \begin{bmatrix}
        \blackcell & \blackcell & \blackcell \\
        \blackcell & \blackcell & \blackcell \\
        \blackcell & \blackcell & \blackcell \\
    \end{bmatrix}
    \quad
    \begin{bmatrix}
        2200 & \blackcell & 2202 \\
        \blackcell & \blackcell & \blackcell \\
        2220 & \blackcell & \blackcell \\
    \end{bmatrix}
    \]
    
    \[
    \begin{bmatrix}
    \textcolor{red}{O(Odd Adjust)}, \textcolor{red}{E(Even Adjust)}
    \end{bmatrix}
    \]
\end{minipage}
% 右侧OE矩阵
\hfill
\begin{minipage}[t]{0.35\textwidth}
    \[
    \begin{aligned}
    I_1 &= \{O, 0011, 0101, 1001, 0110, 1010,  \\
        &\quad 1100, 0111, 1011, 1101, 1110\}
    \end{aligned}
    \]

    \[
    \begin{aligned}
    I_2 &= \{E, 0022, 0202, 2002, 0220, 2020, \\
        &\quad 2200, 0222, 2022, 2202, 2220\}
    \end{aligned}
    \]

    \[
    R = \{0011,0110,0111,1110\}\subset I_1
    \]

    \[
    I = I_1 \cup I_2
    \]
\end{minipage}

$I_1\setminus\{O\}$ and $I_2\setminus\{E\}$ are both 3-wise linearly independent sets, which will be proved later with the general $I_1$ and $I_2$ together. 
We select $R = \{0011,0110,0111,1110\}$ as the set of regular redundancy trits.By setting appropriate values for these positions, we can reset the overall XOR sum of the code, $P_{all}$ to 0000. Next, we assign values to the two special redundancy positions $O$ and $E$, such that the sum of the values at indices in $I_1,I_2$, denoted $P_1,P_2$, are both divisible by 3. Further sections will prove that the code guarantees correction of 1 error, detection of 2 errors, and reliable reporting of error presence when 3 errors occur. So, A2 code is a ternary linear block code with block length 22, message length 16, rate 0.727, and minimum distance 4, denoted $\left[22,16,4\right]_3$ code or SEC--TED code.

%=-=-=-=-=-=-=-=-=-=-=-=-=-=-=-=-=-=-=-=-=-=-=-=-=-=-=-=-=-=-=-=-=-=-=-
%-=-=-=-=-=-=-=-=-=-=-=-=-=-=-=-=-=-=-=-=-=-=-=-=-=-=-=-=-=-=-=-=-=-=-=
%\newpage
%-=-=-=-=-=-=-=-=-=-=-=-=-=-=-=-=-=-=-=-=-=-=-=-=-=-=-=-=-=-=-=-=-=-=-=
%=-=-=-=-=-=-=-=-=-=-=-=-=-=-=-=-=-=-=-=-=-=-=-=-=-=-=-=-=-=-=-=-=-=-=-

\subsubsection{Example}

\begin{algorithm}[H]
\caption{Encode $[22,16,4]_3$ Linear Code A2-Initialize: 0,211,001,022,101,122}
\begin{algorithmic}[1]
    \State \quad \textbf{Initialize: } Place the 0,211,001,022,101,122 in message trits, and calculating $P_{message}$:\\
\begin{minipage}[t]{0.60\textwidth}
    \[
    \begin{bmatrix}
        \blackcell & \blackcell & \blackcell \\
        \blackcell & \textcolor{red}{0011} & \blackcell \\
        \blackcell & \blackcell & 0 \\
    \end{bmatrix}
    \quad
    \begin{bmatrix}
        \blackcell & 2 & \blackcell \\
        \textcolor{red}{0110} & \textcolor{red}{0111} & \blackcell \\
        \blackcell & \blackcell & \blackcell \\
    \end{bmatrix}
    \quad
    \begin{bmatrix}
        \blackcell & \blackcell & 1 \\
        \blackcell & \blackcell & \blackcell \\
        1 & \blackcell & 0 \\
    \end{bmatrix}
    \]
    \[
    \begin{bmatrix}
        \blackcell & 0 & \blackcell \\
        1 & 0 & \blackcell \\
        \blackcell & \blackcell & \blackcell \\
    \end{bmatrix}
    \quad
    \begin{bmatrix}
        2 & 2 & \blackcell \\
        \textcolor{red}{1110} & \blackcell & \blackcell \\
        \blackcell & \blackcell & \blackcell \\
    \end{bmatrix}
    \quad
    \begin{bmatrix}
        \blackcell & \blackcell & \blackcell \\
        \blackcell & \blackcell & \blackcell \\
        \blackcell & \blackcell & \blackcell \\
    \end{bmatrix}
    \]
    \[
    \begin{bmatrix}
        \blackcell & \blackcell & 1 \\
        \blackcell & \blackcell & \blackcell \\
        0 & \blackcell & 1 \\
    \end{bmatrix}
    \quad
    \begin{bmatrix}
        \blackcell & \blackcell & \blackcell \\
        \blackcell & \blackcell & \blackcell \\
        \blackcell & \blackcell & \blackcell \\
    \end{bmatrix}
    \quad
    \begin{bmatrix}
        1 & \blackcell & 2 \\
        \blackcell & \blackcell & \blackcell \\
        2 & \blackcell & \blackcell \\
    \end{bmatrix}
    \]
    \[
    \begin{bmatrix}
    \textcolor{red}{O}, \textcolor{red}{E}
    \end{bmatrix}
    \]
\end{minipage}
% 右侧OE矩阵
\hfill
\begin{minipage}[t]{0.35\textwidth}
    \[
    (I_{\text{message}} = I \setminus \bigl(R \cup \{O, E\}\bigr))
    \]
    \[
    P_{message} = \bigoplus_{i \in I_{\mathrm{Message}}} = v_i \cdot i
    \]
    \[
    \begin{aligned}
    \quad = 2 \cdot 0101
            \oplus 1 \cdot 0202
            \oplus 1 \cdot 0220\\
    \quad  \oplus 1 \cdot 1010 
            \oplus 2 \cdot 1100
            \oplus 2 \cdot 1101\\
    \quad  \oplus 1 \cdot 2002
            \oplus 1 \cdot 2022 
            \oplus 1 \cdot 2200\\
    \quad \oplus 2 \cdot 2202 
      \oplus 2 \cdot 2220 = \textbf{1202}\\
    \end{aligned}
    \]
\end{minipage}
\end{algorithmic}
\end{algorithm}

\begin{algorithm}[H]
\caption{Encode $[22,16,4]_3$ Linear Code A2-Set Redundant: 0,211,001,022,101,122}
\begin{algorithmic}[1]
\State \quad\textbf{Set the Redundant Trits:} 
Fill in the value for redundancy trits to reset the XOR sum of the code.
\begin{minipage}[t]{0.60\textwidth}
    \[
    \begin{bmatrix}
        \blackcell & \blackcell & \blackcell \\
        \blackcell & \textcolor{red}{2} & \blackcell \\
        \blackcell & \blackcell & 0 \\
    \end{bmatrix}
    \quad
    \begin{bmatrix}
        \blackcell & 2 & \blackcell \\
        \textcolor{red}{0} & \textcolor{red}{2} & \blackcell \\
        \blackcell & \blackcell & \blackcell \\
    \end{bmatrix}
    \quad
    \begin{bmatrix}
        \blackcell & \blackcell & 1 \\
        \blackcell & \blackcell & \blackcell \\
        1 & \blackcell & 0 \\
    \end{bmatrix}
    \]
    \[
    \begin{bmatrix}
        \blackcell & 0 & \blackcell \\
        1 & 0 & \blackcell \\
        \blackcell & \blackcell & \blackcell \\
    \end{bmatrix}
    \quad
    \begin{bmatrix}
        2 & 2\rightarrow\textcolor{green}{0} & \blackcell \\
        \textcolor{red}{2} & \blackcell & \blackcell \\
        \blackcell & \blackcell & \blackcell \\
    \end{bmatrix}
    \quad
    \begin{bmatrix}
        \blackcell & \blackcell & \blackcell \\
        \blackcell & \blackcell & \blackcell \\
        \blackcell & \blackcell & \blackcell \\
    \end{bmatrix}
    \]
    \[
    \begin{bmatrix}
        \blackcell & \blackcell & 1 \\
        \blackcell & \blackcell & \blackcell \\
        0 & \blackcell & 1 \\
    \end{bmatrix}
    \quad
    \begin{bmatrix}
        \blackcell & \blackcell & \blackcell \\
        \blackcell & \blackcell & \blackcell \\
        \blackcell & \blackcell & \blackcell \\
    \end{bmatrix}
    \quad
    \begin{bmatrix}
        1 & \blackcell & 2 \\
        \blackcell & \blackcell & \blackcell \\
        2 & \blackcell & \blackcell \\
    \end{bmatrix}
    \]
    \[
    \begin{bmatrix}
    \textcolor{red}{O=2}\rightarrow \textcolor{green}{1}, \textcolor{red}{E=0}
    \end{bmatrix}
    \]
\end{minipage}
% 右侧OE矩阵
\hfill
\begin{minipage}[t]{0.35\textwidth}
    \[
    (R = \{0011,0110,0111,1110\})
    \]
    \[
    P_{redundant} = \bigoplus_{i \in R} v_i \cdot i =\bf{2101}
    \]
    \[
    \begin{aligned}
        \quad = 2 \cdot 0011
        \oplus 0 \cdot 0110
        \oplus 2 \cdot 0111 
        \oplus 2 \cdot 1110
    \end{aligned}
    \]
    \[
    P_{all} = \bigoplus_{i \in I} v_i \cdot i 
    \]
    \[
    \begin{aligned}
    \quad = P_{redundant} \oplus P_{message}\\
    \quad = 2101 \oplus 1202 = \bf{0000}
    \end{aligned}
    \]
\end{minipage}
\begin{center}
    \textit{Note: Convert 2 on 1101 to 0, and convert 2 on O to 1 as potential noises.}\\
\end{center}
\State \quad Further, set the $O$ and $E$:
    \[
    P_1 = (\sum_{i\in I_1-O}i+O)\mod{3} = (13+O)\mod{3}=0= (\sum_{i\in I_2-E}i+E)\mod{3} = (9+E)\mod{3} = P_2
    \]
Thus O=2,E=0, set them at the end. We successfully construct the code: 2,020,211,001,022,210,112,220.

\end{algorithmic}
\end{algorithm}

\begin{algorithm}[H]
\caption{Decode $[22,16,4]_3$ Linear Code A2: 2,020,211,001,022,210,112,220}
\begin{algorithmic}[1]
\State \quad \textbf{Noise Generated:} We modify 2 arbitrary trits here as potential noises, 2 on 1101 to 0, and 2 on O to 1: 2,020,211,001,02\textcolor{green}{2},210,112,2\textcolor{green}{1}0. 
\State \quad \textbf{Check 3 Features:}
Check the XOR product sum, the odd indices value sum, the even indices value sum.
    \[
    P_{all} = \bigoplus_{i\in I} v_i\cdot i = 1101, \quad P_1 = \sum_{i\in I_1} v_i\mod{3} = 0, \quad P_2 = \sum_{i\in I_2} v_i\mod{3} = 0
    \]

\State \quad \textbf{Conclusion:} Since both odd and indices value sums are equals to 0(after modulo 3), but the XOR sum of the code is not. This imply that there are at least 2 error occur in either odd or even region. Therefore, we detect that \textbf{there are at least two mistakes}.

\end{algorithmic}
\end{algorithm}

%=-=-=-=-=-=-=-=-=-=-=-=-=-=-=-=-=-=-=-=-=-=-=-=-=-=-=-=-=-=-=-=-=-=-=-
%-=-=-=-=-=-=-=-=-=-=-=-=-=-=-=-=-=-=-=-=-=-=-=-=-=-=-=-=-=-=-=-=-=-=-=
%\newpage
%-=-=-=-=-=-=-=-=-=-=-=-=-=-=-=-=-=-=-=-=-=-=-=-=-=-=-=-=-=-=-=-=-=-=-=
%=-=-=-=-=-=-=-=-=-=-=-=-=-=-=-=-=-=-=-=-=-=-=-=-=-=-=-=-=-=-=-=-=-=-=-

\subsubsection{Extended A2 Codes and Adaptive-Length Construction}

The idea of A2 code could also be generalized as a class of ternary linear codes. Define f'(r) as the cardinality of the maximum 3-wise linearly independent set under length r, it will be $[2{f('r)}+2,2{f'(r)}-r,4]_3$. There are r regular redundant trits, 1 odd redundant trit, and 1 even redundant trit. (The following \( f(r) \) still needs to be proven equal to the previously defined \( f'(r) \).)\\
\begin{minipage}[c]{0.3\textwidth}
\centering
    \[
    n =
    \begin{cases}
        2, & \text{if } r = 3, \\
        \left\lfloor \frac{r}{2} \right\rfloor, & \text{if } 4 \leq r \leq 7, \\
        \left\lceil \frac{r}{2} \right\rceil - 1, & \text{if } r \geq 8.
    \end{cases}
    \]
    \[
    f(r) = \sum_{i=n}^{2n-1} \binom{r}{i} 
    \]
\end{minipage}
\hfill
\begin{minipage}[c]{0.68\textwidth}
\centering
\small % 缩小字体以适应宽度
\begin{tabular}{|r|r|r|r|r|r|r|r|r|r|}
\hline
\textbf{r} & 3 & 4 & 5 & 6 & 7 & 8 & 9 & $\cdots$ \\
\hline
\textbf{f(r)} & 4 & 10 & 20 & 41 & 91 & 182 & 372 & $\cdots$ \\
\hline
\textbf{2f(r)+2} & 10 & 22 & 42 & 84 & 184 & 366 & 746 & $\cdots$ \\
\hline
\textbf{2f(r)-r} & 5 & 16 & 35 & 76 & 175 & 356 & 735 & $\cdots$ \\
\hline
\textbf{Rate} & 0.5 & 0.625 & 0.833 & 0.905 & 0.951 & 0.973 & 0.985 & $\cdots$ \\
\hline
\end{tabular}
\end{minipage}
\begin{table}[h]
\centering
\scriptsize
\setlength{\tabcolsep}{2pt}
\label{tab:originalcompare}
\begin{tabularx}{\linewidth}{@{}%
  >{\centering\arraybackslash}p{.05\linewidth}%
  >{\raggedright\arraybackslash}p{.10\linewidth}%
  >{\raggedright\arraybackslash}p{.08\linewidth}%
>{\centering\arraybackslash}p{.18\linewidth}%
  >{\raggedright\arraybackslash}p{.10\linewidth}%
  >{\raggedright\arraybackslash}p{.50\linewidth}%
}
\toprule
Section & Name & $[n,k,d]_q$ & General $[n,k,d]_q$  & xEC-yED & Feature \\
\midrule
4.2 & Alter 2 & $[22,16,4]_3$ & $[2{f'(r)}+2,2{f'(r)}-r,4]_3$ & SEC-TED& Leverages compact indices to enhance computational efficiency. \\
4.3 & A2 sparse & $[10,6,4]_3$ & $[{f'(r)},{f'(r)}-r,4]_3$ & SEC-TED & Drop $O,E,I_2$, concise version of A2 with higher code rate. \\
\bottomrule
\end{tabularx}
\caption{A2 and A2 Sparse}
\end{table}

\begin{algorithm}[H]
\caption{General A2 Construction}
\begin{algorithmic}[1]
    \State \textbf{Construct $I_1$ and $I_2$}:($I = I_1 \cup I_2$ is the set of selected indices)
       \begin{itemize}
       \item Define the set \( I_1 \) as the set of all \( r \)-trit ternary numbers composed only of the trits 0 and 1, where exactly \( k \) of the trits are 1, and \( k \in [n, 2n-1] \), ($n\geq1$). Then construct $I_2$ by multiply all elements in $I_1$ by 2.
      \item Add O in $I_1$, E in $I_2$
        \end{itemize}
\State \textbf{Decide the Redundant Trits and Message Trits}:   
   \begin{itemize}
    \item \textbf{Case 1: Standard Case (when \( r > 7 \lor r \in O(\text{means odd here})\) )}  
    \begin{itemize}
        \item $R_j = \sum_{i = 1+ \left\lfloor \frac{j}{2} \right\rfloor}^{n + \left\lceil \frac{j}{2} \right\rceil} e_i $ 
        \hspace{10pt} (\(\forall j<r, R_j\in I_1\), since there are n or n+1 trits equals to 1 for $R_j$).
        \item Define \( R = \{R_1,R_2,\cdots, R_r\} \), where \( |R| = r \), which are all of our regular redundant trits.
        \item \( O,E \) are special redundant trits.
        \item The remaining indices in \( I \) are message trits.
    \end{itemize}
    
    \item \textbf{Case 2: Special Case (when \(  r \leq 7\land r \in E\text{(means even here)} \)}  
    \begin{itemize}
        \item $R_j = \sum_{i = \left\lceil \frac{j}{2} \right\rceil}^{n + \left\lfloor \frac{j}{2} \right\rfloor} e_i $ 
        \hspace{10pt} (\(\forall j<r, R_j\in I_1\), since there are n or n+1 trits equals to 1 for $R_j$).
        \item Others are same with case 1.
    \end{itemize}
   \end{itemize}
   
   \State \textbf{Set the values for regular redundant trits}: 
   Under ternary XOR, every elementary vector can be written as a linear combination of the vectors $R_j$.
          \begin{itemize}
        \item \textbf{Case 1: Standard Case (when \( r > 7 \lor r \in O\) )}
            \begin{itemize}
       \item $e_i = R_{2i-1}-R_{2i}$, $i\in[1,n]$
       \item $e_j = R_{2(j-(n+1))+1}-R_{2(j-(n+1))}$, $j\in[n+2,r]$
       \item $e_{n+1} = R_1-\sum^{n}_{i=1}e_i$
    \end{itemize}
    
        \item \textbf{Case 2: Special Case (when \(  r \leq 7\land r \in E \))}
            \begin{itemize}
       \item $e_i = R_{2i}-R_{2i+1}$, $i\in[1,n-1]$
       \item $e_j = R_{2(j-n)}-R_{2(j-n)-1}$, $j\in[n+1,r]$
       \item $e_{n} = R_1-\sum^{n-1}_{i=1}e_i$
    \end{itemize}
       \item Thus, we can decomposed the inverse of $P_{message}$ in to sum of $\text{coefficient}*e_i$, and then rewritten $e_i$ by $R_j$ to decide the value $v_{R_j}$ by coefficient of $R_j$.
   \end{itemize}

    \State \textbf{Set the values for Odd and Even Redundant Trits}:
       \begin{itemize}
       \item Set their value $v_O$ and $v_E$ such that 
       \[ P_1 = (v_O + \sum_{i\in I_1-\{O\}} v_{i}) \mod{3} = 0 = (v_E + \sum_{j\in I_2-\{E\}} v_{j}) \mod{3} = P_2
        \]
   \end{itemize}

    \State \textbf{Notice: If Length of message is not in form $2\cdot f(r)-r$}
       \begin{itemize}
    \item Assume the length of the original message is \( m \). Then, there exists an \( n \) such that \( m \in (f(n-1), f(n)] \). Construct \( I_1 \) and \( I_2 \), but remove the last \( (2\cdot f(n) - n - m) \) indices from \( I_2 \) as banished indices(exclude E). Continue with steps 2, 3, and 4 to construct the code.
   \end{itemize}
   
\end{algorithmic}
\end{algorithm}

%=-=-=-=-=-=-=-=-=-=-=-=-=-=-=-=-=-=-=-=-=-=-=-=-=-=-=-=-
%=-=-=-=-=-=-=-=-=-=-=-=-=-=-=-=-=-=-=-=-=-=-=-=-=-=-=-=-
%\newpage
%=-=-=-=-=-=-=-=-=-=-=-=-=-=-=-=-=-=-=-=-=-=-=-=-=-=-=-=-
%=-=-=-=-=-=-=-=-=-=-=-=-=-=-=-=-=-=-=-=-=-=-=-=-=-=-=-=-
\subsubsection{Proof of Minimum Distance 4 for the A2 Code}
A minimum Hamming distance of 4 means that any two valid codewords differ in at least 4 trits. This ensures that A2 code can correct any single-trit error and detect the presence of errors when up to three trits are corrupted.

\begin{theorem}
\hspace{0pt}\\
For any two different ternary messages X,Y in same length, they convert to A2 code X' and Y', of course in same length too. X' and Y' have at least 4 trits different.
\end{theorem}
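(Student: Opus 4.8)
The plan is to exploit linearity. The A2 encoder writes the message trits verbatim and fixes the redundant trits ($R$, $O$, $E$) by the three \emph{linear} conditions $P_{all}=0$, $P_1\equiv 0$, and $P_2\equiv 0$, so the codeword depends $\F_3$-linearly on the message. Hence the trit-wise difference $X'\ominus Y'$ is itself a nonzero codeword, and the claim ``$X'$ and $Y'$ differ in at least $4$ trits'' is equivalent to the statement that every nonzero codeword has Hamming weight at least $4$. I would therefore recast the three checks as a single parity-check matrix $H'$ over $\F_3$ whose rows are: one row per index-trit encoding $\bigoplus_{i\in I} v_i\cdot i = 0$ read digit by digit, one row for $P_1=\sum_{i\in I_1}v_i$, and one row for $P_2=\sum_{i\in I_2}v_i$. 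Proving $d\ge 4$ is then exactly the statement that every set of at most three columns of $H'$ is $\F_3$-linearly independent.

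The column of a position $i\in I_1\setminus\{O\}$ is $(\mathbf{i};1;0)$, where $\mathbf{i}\in\F_3^{r}$ is the trit vector of its index; a position $i\in I_2\setminus\{E\}$ gives $(\mathbf{i};0;1)$; and the special positions contribute $O\mapsto(\mathbf 0;1;0)$ and $E\mapsto(\mathbf 0;0;1)$, since $O,E$ enter only the group sums and not $P_{all}$. The two bottom rows are the workhorses: for any candidate dependence $\sum_k \alpha_k c_{i_k}=0$ they force $\sum_{i_k\in I_1}\alpha_k\equiv 0$ and $\sum_{i_k\in I_2}\alpha_k\equiv 0\pmod 3$ separately. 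With at most three columns this immediately kills every ``mixed'' support: if one side contributes exactly one or two columns, its group row forces those coefficients to cancel, and a short check (two distinct $0/1$ indices, or two distinct $0/2$ indices, are never scalar multiples under XOR, and $O,E$ have zero top block) collapses the relation to the trivial one. This disposes of weight $1$ and weight $2$ and reduces weight $3$ to the case where all three nonzero positions lie on the same side.

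For the decisive weight-$3$, one-sided case --- say all three on the $I_1$ side with coefficients $\alpha,\beta,\gamma\in\{1,2\}$ --- the row $P_1$ gives $\alpha+\beta+\gamma\equiv 0\pmod 3$, which for coefficients in $\{1,2\}$ forces them to be all $1$ or all $2$. If none of the three is $O$, then $P_{all}=0$ reads $\alpha(\,p\oplus q\oplus s\,)=0$ for three \emph{distinct} indices of $I_1\setminus\{O\}$, a nontrivial $3$-wise XOR relation excluded by the $3$-wise XOR linear independence of $I_1\setminus\{O\}$ (the property asserted in the paper and taken here as a lemma). If one of the three is $O$, its zero top block drops it from $P_{all}$, leaving $\alpha(p\oplus q)=0$ for two distinct $0/1$ indices, again impossible. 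The $I_2$ side is identical, using the independence of $I_2\setminus\{E\}$. Together these rule out weights $1,2,3$, giving $d\ge 4$.

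The main obstacle I expect is exactly this weight-$3$ one-sided case: everything hinges on first using the two group-sum rows to (a) confine the support to a single side and (b) pin the three coefficients to a common value, so that $P_{all}=0$ degenerates into a coefficient-free $3$-term XOR relation on which the $3$-wise independence hypothesis can bite. The remaining work is bookkeeping --- tracking the special positions $O$ and $E$, whose empty top block makes them strictly easier --- together with confirming that the asserted $3$-wise XOR independence of $I_1\setminus\{O\}$ and $I_2\setminus\{E\}$ is genuinely in hand, since the whole argument rests on it.
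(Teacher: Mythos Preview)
Your proposal is correct and takes a genuinely different route from the paper. You reduce the distance claim, via linearity, to the standard parity-check formulation ``any three columns of $H'$ are $\F_3$-independent,'' and then dispatch weights $1,2,3$ with the two group-sum rows (which confine a putative short dependence to a single side and pin the three coefficients to a common value) followed by one clean appeal to the $3$-wise XOR independence of $I_1\setminus\{O\}$ (resp.\ $I_2\setminus\{E\}$). The paper, by contrast, never writes down $H'$ or argues column-wise: it fixes two messages $X,Y$, performs a case split on how many \emph{message} positions differ ($1,2,3,\ge 4$), and in each subcase counts how many redundant trits in $R$ and how many of $O,E$ must change, invoking the $3$-wise independence repeatedly (with varying coefficients) to lower-bound the number of affected redundancy positions. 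Your argument is shorter, uses the independence hypothesis only in the single ``all-coefficients-equal'' instance, and in fact proves the slightly stronger statement that the full code has minimum weight $\ge 4$; the paper's argument is more elementary in that it never invokes linearity of the encoder or the column-rank characterization of distance, and it makes visible exactly which of $R$, $O$, $E$ carry the extra differences in each scenario, which dovetails with the decoding table that follows. Your caveat at the end is well placed: both proofs rest on the $3$-wise XOR independence of $I_1\setminus\{O\}$ and $I_2\setminus\{E\}$, which the paper establishes separately.
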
 

Recall that the correct code has the following 3 properties:\\
\begin{minipage}{0.3\textwidth}
    \[P_1 = \sum_{i\in I_1} v_i \mod{3} = 0\]
\end{minipage}
\begin{minipage}{0.3\textwidth}
    \[P_2 = \sum_{i\in I_2} v_i \mod{3} = 0\]
\end{minipage}
\begin{minipage}{0.3\textwidth}
    \[P_{all} = \bigoplus_{i \in I} v_{i} \cdot i = 0\]
\end{minipage}\\

\begin{proof}
\hspace{0pt}\\
Try to prove the distance is at least 4, We can analyze the problem by dividing it into 4 cases:

\begin{enumerate}[leftmargin=*,label=\textbf{Case \arabic*:}]
    %%%%%%%%%%%%%%%%%%%%%%%% Case 1 %%%%%%%%%%%%%%%%%%%%%%%%
    \item $X$ and $Y$ differ in exactly 1 trit
    \begin{itemize}
        \item Let the differing position be $A \in I \setminus (R \cup \{\text{O}, \text{E}\})$
    \begin{enumerate}[label=\textbf{Subcase 1.\arabic*:}]
        \item $2A \in R$
        \begin{itemize}
            \item Regular redundant trits differ at $2A$
            \item Original messages differ at $A$
            \item Notice $2A\in R\subset I_1$, thus $A\in I_2$
            \item Then X' and Y' have O and E differences, to satisfy $P_1$ and $P_2$:
            \[
            X'_{\text{O}} \neq Y'_{\text{O}} \quad \text{and} \quad X'_{\text{E}} \neq Y'_{\text{E}}
            \]
            \item Total differences:
            \[
            \underbrace{A}_{\text{msg}} + \underbrace{2A}_{\text{red}} + \underbrace{\text{O}+\text{E}}_{\text{chk}} = 4 \geq 4\quad \text{(Holds)}
            \]
        \end{itemize}
        
        \item $2A \notin R$
        \begin{itemize}
            \item Define $\Delta a \equiv X'_A - Y'_A \pmod{3}$
            \item 3-wise independence guarantees:
            \[
            \nexists B,C \in R\ (B,C\neq 2A\notin R)\text{s.t.}\ \Delta a \cdot A \equiv \beta B \oplus \gamma C \pmod{3} \quad (\beta,\gamma \neq 0)
            \]
            \item Minimum redundant differences: 3
            \item Total differences: 
            \[
            \underbrace{1}_{\text{msg}} + \underbrace{3}_{\text{red}}=4 \geq 4 \quad \text{(Holds)}
            \]
        \end{itemize}
    \end{enumerate}
    \end{itemize}

    %%%%%%%%%%%%%%%%%%%%%%%% Case 2 %%%%%%%%%%%%%%%%%%%%%%%%
    \item $X$ and $Y$ differ at 2 trits
    \begin{itemize}
        \item Differing positions $A,B \in I \setminus (R \cup \{\text{O}, \text{E}\})$
        \item Modular differences:
        \[
        \Delta a \equiv X'_A - Y'_A \pmod{3},\quad \Delta b \equiv X'_B - Y'_B \pmod{3}
        \]
        
        \begin{enumerate}[label=\textbf{Subcase 2.\arabic*:}]
            \item $B = 2A$ with $\Delta a = \Delta b$
            \begin{itemize}
                \item Message XOR sums identical $\Rightarrow$ Redundant trits match
                \item WLOG, assume: $A \in I_1 \Rightarrow 2A \in I_2$
                \item As a result, their O and E will not match due to $P_1$ and $P_2$:
                \[
                \underbrace{A,B}_{\text{msg}} + \underbrace{\text{O},\text{E}}_{\text{chk}} = 4 \quad \text{(Holds)}
                \]
            \end{itemize}
            
            \item $B = 2A$ with $\Delta a \neq \Delta b$
            \begin{itemize}
                \item Let $\Delta a \cdot A \oplus \Delta b \cdot B \pmod{3}\equiv d \cdot A$ ($d \in \{1,2\}, d\neq0 \text{ as } \Delta a \neq \Delta b$)
                \item 3-wise independence implies at least 3 redundant trits not match:
                \[
                \nexists M,N \in R \setminus \{A,B\}\ \text{s.t.}\ dA \equiv \beta M \oplus \gamma N \pmod{3}
                \]
                \item Total differences:
                \[
                \underbrace{2}_{\text{msg}} + \underbrace{3}_{\text{red}} = 5 \geq 4 \quad \text{(Holds)}
                \]
            \end{itemize}
            
            \item $B \neq 2A$
            \begin{itemize}
                \item 3-wise independence implies at least 2 redundant trits not match:
                \[
                \nexists M \in R \setminus \{A,B\}\ \text{s.t.}\ \Delta a A \oplus \Delta b B \equiv \gamma M \pmod{3}
                \]
                \item Total differences:
                \[
                \underbrace{2}_{\text{msg}} + \underbrace{2}_{\text{red}} = 4 \geq 4 \quad \text{(Holds)}
                \]
            \end{itemize}
        \end{enumerate}
    \end{itemize}

    %%%%%%%%%%%%%%%%%%%%%%%% Case 3 %%%%%%%%%%%%%%%%%%%%%%%%
    \item $X$ and $Y$ differ at 3 trits
    \begin{itemize}
        \item Differing positions $A,B,C \in I \setminus (R \cup \{\text{O}, \text{E}\})$
        \item Modular differences:
        \[
        \begin{aligned}
            \Delta a &\equiv X'_A - Y'_A \pmod{3} \\
            \Delta b &\equiv X'_B - Y'_B \pmod{3} \\
            \Delta c &\equiv X'_C - Y'_C \pmod{3}
        \end{aligned}
        \]
        \item 3-wise independence property implies that their message XOR sum can not be same:
    \[
    \nexists (\Delta a, \Delta b, \Delta c) \in \{1,2\}^3 \ \text{s.t.} \ 
    \Delta a \cdot A \oplus \Delta b \cdot B \oplus \Delta c \cdot C \equiv 0 \pmod{3}
    \]
        \item Minimum differences:
        \[
        \underbrace{3}_{\text{msg}} + \underbrace{\geq1}_{\text{red}} \geq 4 \quad \text{(Holds)}
        \]
    \end{itemize}

    %%%%%%%%%%%%%%%%%%%%%%%% Case 4 %%%%%%%%%%%%%%%%%%%%%%%%
    \item $X$ and $Y$ differ at $\geq4$ trits
    \begin{itemize}
        \item Immediate holds:
        \[
        \underbrace{\geq4}_{\text{message}} \geq 4 \quad \text{(Satisfies minimum distance requirement)}
        \]
        \item Note: This holds without considering redundant/checksum trits
    \end{itemize}
\end{enumerate}
\textbf{Existence of a Minimal Distance-4:}\\
Further, we need to show there exist a pair of messages have exact distance 4 with their A2 code , given:\\
\[
X = 0,211,001,022,101,122 \quad X' = 2,020,211,001,022,210,112,220
\]
\[
Y = 0,\textcolor{green}{0}\textcolor{green}{2}1,0\textcolor{green}{2}1,022,\textcolor{green}{0}01,122 \quad Y' = 2,0\textcolor{green}{0}0,2\textcolor{green}{2}1,0\textcolor{green}{2}1,022,2\textcolor{green}{0}0,112,220
\]
X and Y are message in same length, but their A2 code have exact 4 trits different, which means the minimum Hamming distance of the A2 code can not be larger than 4. In concluding, the minimum distance of A2 code is 4.
\end{proof}

\subsubsection{A2 Error Correction Strategies}

Recall that the correct A2 code has following 3 syndromes:\\
\begin{minipage}{0.3\textwidth}
    \[P_1 = \sum_{i\in I_1} v_i \mod{3} = 0\]
\end{minipage}
\begin{minipage}{0.3\textwidth}
    \[P_2 = \sum_{i\in I_2} v_i \mod{3} = 0\]
\end{minipage}
\begin{minipage}{0.3\textwidth}
    \[P_{all} = \bigoplus_{i \in I} v_{i} \cdot i = 0\]
\end{minipage}\\

Let X' denote some A2 code. In the following analysis, we disregard cases involving three or more errors. The corresponding strategies are summarized in \hyperref[tab:error-cases]{Table~\ref*{tab:error-cases}}.

\begin{itemize}
    \item \textbf{4-}: It implies at least one error located in $I_1$ and $I_2$ respectively. So, the code detects that there are two or more errors.
    \item \textbf{-C}: It is impossible to alter $P_{\text{all}}$ from 0 to the current value by editing only one trit, which implies that there are two or more errors.
    \item \textbf{1A}: It is a perfect code. Since our code has minimum distance 4, it required at least 4 errors to change to another perfect code.
    \item \textbf{2A}: $P_2\neq 0$ implies that there is at least one error in $I_2$. While $P_1=0$ either 0 or 2+ errors in $I_1$, since $1+2\geq3$, we only take the cases 0 error in $I_1$. Further, by 3-wise linearly independent property:     
    \[\nexists A,B,C \in I_2\setminus\{E\}\ \text{s.t.}\ A\neq B\neq C\neq A,\ \alpha,\beta,\gamma \in \{1,2\}, \text{and } \alpha A \oplus \beta B\oplus \gamma C \equiv 0,\]
    It requires four or more errors to keep $P_{\text{all}}$ unchanged. Thus, such error must be located at $E\in I_2$, and $P_2$ represents its change from the original value. So we have the true value $\overline{E}: X_{\overline{E}} = (X_E-P_2) \mod 3$.
    \item \textbf{3A}: Similar with the 2A, such error must located on $O\in I_1$, the true value $\overline{O}:X_{\overline{O}} = (X_O-P_1)\mod 3$.

    \item \textbf{1B}: $P_1=P_2=0$ implies there is either 0 or 2+ error in each $I_1$ and $I_2$. As $0\notin I$, with $P_{\text{all}},2P_{all}\neq 0$, it implies that there is at least one error for this code. So, it is not the cases that both $I_1$ and $I_2$ have 0 errors, which concludes with 2 or more errors in this code.

    \item \textbf{Complex Case 1(2B):} 
    Define $f = P_2 \cdot P_{\text{all}}$. If $f \notin I_2$, it implies there are two or more errors.
    If $f \in I_2$, then there is only one error located at $f \in I_2$, with the original value $X_{\overline{f}} =(X_f - P_2) \mod 3$
    
    \item \textbf{Complex Case 2(3B):}
    Define $f = P_1 \cdot P_{\text{all}}$. If $f \notin I_1$, it implies there are two or more errors.
    If $f \in I_1$, then there is only one error located at $f \in I_1$, with the original value $X_{\overline{f}} =(X_f - P_1) \mod 3$

\end{itemize}

\begin{table}[h]
\centering

\renewcommand{\arraystretch}{1.5} % 增加行高
\setlength{\tabcolsep}{5pt} % 增加列间距
\begin{tabular}{|l|c|c|c|}
\hline
 & {A: $P_{\text{all}}=0$} 
 & {B: $P_{all}\in I \lor 2P_{all}\in I$} 
 & {C: $P_{\text{all}}, 2P_{\text{all}}\notin I\cup\{0\}$} \\ 
\hline

{1: $P_1=0, P_2=0$} & Perfect & \multicolumn{2}{c|}{Two Errors \ } \\ 
\hline

{2: $P_1=0, P_2\neq0$} & $X_{\overline{E}} = X_E - P_2$  & \textcolor{red}{Complex Case 1} & Two Errors \\ 
\hline

{3: $P_1\neq0, P_2=0$} & $X_{\overline{O}} = X_O - P_1$ & \textcolor{red}{Complex Case 2} & Two Errors \\ 
\hline

{4: $P_1\neq0, P_2\neq0$} & \multicolumn{3}{c|}{Two Errors \ } \\ % 增加空行高度
\hline
\end{tabular}
\caption{Error Distribution Case Analysis} \label{tab:error-cases}
\end{table}

Since A2 code has a minimum distance of 4, when three errors occur the decoder may detect that an error \textbf{exists}, but mistakenly classify it as a single‑error case (a false correction). By definition, it is still recognized as triple errors detection(TED). A2 code can additionally detect the presence of exactly two errors, but it still cannot correct them.

\subsubsection{Linear Independence among Indices}
The index sets $I_1$ and $I_2$ have the property that any three distinct elements within the same set are linearly independent.
\begin{theorem}
\hspace{0pt}\\
Pick any $A,B,C \in I_1\setminus\{O\}$, such that $ A\neq B\neq C\neq A$. Then:\\
    \[              
    \forall \alpha,\beta,\gamma \in \{1,2\}, \alpha A \oplus \beta B\oplus \gamma C \neq 0
    \]
\end{theorem}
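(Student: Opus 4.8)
The plan is to translate the XOR identity into a digitwise condition and then run a short support-pattern analysis governed by the weight window $[n,2n-1]$ that defines $I_1\setminus\{O\}$. Writing each of $A,B,C$ as a $0/1$ vector in $\{0,1\}^r$ (their trits are restricted to $0$ and $1$ by construction of $I_1$), the equation $\alpha A \oplus \beta B \oplus \gamma C = 0$ is equivalent to $\alpha a_i + \beta b_i + \gamma c_i \equiv 0 \pmod 3$ holding at every trit position $i$. First I would record which local patterns $(a_i,b_i,c_i)\in\{0,1\}^3$ are compatible with a zero residue: a position where exactly one vector is active contributes a single nonzero coefficient and is therefore forbidden; a position where exactly two are active, say with coefficients $\mu,\nu$, is allowed precisely when $\mu\neq\nu$ (so $\mu+\nu\equiv 0$); and the all-active position is allowed precisely when $\alpha=\beta=\gamma$ (so the sum is $3$ or $6$).

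Next I would split on the coefficient vector. Since $\alpha,\beta,\gamma\in\{1,2\}$ is drawn from a two-element set, pigeonhole forces at least two coefficients to coincide, giving two subcases. If all three coincide, the only compatible positions are the empty and the all-active ones, which forces the supports to satisfy $S_A=S_B=S_C$ and hence $A=B=C$, contradicting distinctness. If exactly two coincide, let $X$ be the ``odd one out'' (the vector whose coefficient differs from the other two, $Y$ and $Z$); then the only compatible positions are empty, $\{X,Y\}$-active, and $\{X,Z\}$-active, because the pair $\{Y,Z\}$ shares a coefficient and the triple is nonuniform. These position types are mutually exclusive, so the supports satisfy $S_X = S_Y \sqcup S_Z$ as a disjoint union, yielding the weight identity $w_X = w_Y + w_Z$, where $w$ denotes Hamming weight.

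The finish is the weight window. Every element of $I_1\setminus\{O\}$ has weight in $[n,2n-1]$, so $w_Y+w_Z\geq 2n$ while $w_X\leq 2n-1$; the identity $w_X=w_Y+w_Z$ is thus impossible. Combining the two subcases, no choice of $\alpha,\beta,\gamma\in\{1,2\}$ can make the combination vanish, which is exactly the claim.

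I expect the main obstacle to be organizing the local-compatibility bookkeeping cleanly rather than any deep difficulty: the real content is the single observation that forbidding singleton patterns forces the active sets into a disjoint-union shape, after which the bound $w_Y+w_Z\geq 2n > 2n-1 \geq w_X$ does all the work. I would take care to phrase the ``odd one out'' reduction symmetrically so that the three two-equal configurations $(1,1,2)$, $(1,2,1)$, $(2,1,1)$ (and their doublings $(2,2,1)$, $(2,1,2)$, $(1,2,2)$) are all dispatched by one argument rather than six separate calculations.
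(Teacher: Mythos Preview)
Your proposal is correct and takes a genuinely different route from the paper. The paper first rewrites $\alpha A\oplus\beta B\oplus\gamma C=0$ as $C=\alpha' A\oplus\beta' B$ (absorbing $-\gamma^{-1}$ into the other two coefficients), fixes $l=|S_A\cap S_B|$, and then walks through the four scalar pairs $(\alpha',\beta')\in\{1,2\}^2$ one at a time, counting how many digits of $\alpha'A\oplus\beta'B$ equal $1$ and how many equal $2$; in each case either a $2$-digit survives (so the result is not a $0/1$ string) or the number of $1$-digits falls outside $[n,2n-1]$. You instead keep all three terms, classify the eight local bit patterns once, pigeonhole on the coefficient triple to collapse to two cases, and in the ``two equal'' case read off the structural identity $S_X=S_Y\sqcup S_Z$. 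What the paper's approach buys is concreteness---every case is an explicit digit count---at the cost of four separate calculations. What your approach buys is symmetry and economy: the six nontrivial coefficient triples are handled by a single argument, and the decisive inequality $w_Y+w_Z\ge 2n>2n-1\ge w_X$ appears exactly once. Your framing also makes clearer why the window $[n,2n-1]$ is the right one (its lower endpoint doubled just exceeds its upper endpoint), which is somewhat buried in the paper's case-by-case arithmetic.
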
 
\begin{proof}
\hspace{0pt}\\
This theorem is equivalent to 
    \[
     \forall \alpha,\beta \in \{1,2\},\forall A,B\in I_1 \setminus\{O\}\,\text{where}\ A\neq B, 
    \]
    \[ 
     \nexists C \in I_1\setminus\{O,A,B\}, \ \alpha A \oplus \beta B \equiv C
    \]
Recall how we construct $I_1\setminus{O}$:
\begin{itemize}
   \item Define the set \( I_1 \) as the set of all \( r \)-trit ternary numbers composed only of the trits 0 and 1, where exactly \( k \) of the trits are 1, and \( k \in [n, 2n-1] \).($n\geq1$)
\end{itemize}
Let A have "a" trits equal to 1. Let B have "b" trits equal to 1. And let A and B have "l" trits overlap, which both equal 1. WLOG, $l\leq a\leq b$, and $a,b\in[n,2n-1]$. Then, considering all linear combinations of A and B.

\begin{itemize}
    \item $A\oplus B = S_1$, $S_1$ have $a+b-l$ trits equal to 1, and $l$ trits equal to 2. Since all regular elements in $I_1$ are composed by only 0 and 1, $S_1\in I_1$ only if l = 0. However, $S_1$ have $a+b-l=a+b-0=a+b\geq2n$ trits equal to 1, which is out of range of $I_1$. Thus, $S_1\notin I_1$.
    
    \item $A\oplus 2B = S_2$, have a-l trits equal to 1, b-l trits equal to 2. If $S_2\in I_1$, $b-l = 0$, and $b=l$. Then $a-l\leq0$, and $a-l\notin [n,2n-1]$, contradict . Thus $S_2 \notin I_1$.
    
    \item $2A\oplus B = S_3$, have b-l trits equal to 1, $a-l$ trits equal to 2. If $S_2\in I_1$, $a-l = 0$, and $a=l$. Then $b-l = b-a\leq n-1$(recall $a,b\in[n,2n-1]$). Thus, $b-l\notin [n,2n-1]$, contradict, the number of 1 is out of the range. As a result $S_3 \notin I_1$.
    
    \item $2A\oplus 2B = S_4$, $S_4$ have l trits equal to 1, and $a+b-l$ trits equal to 2. If $S_4\in I_1$, $a+b-l = 0$, and $a+b = l$. However, $0\leq l\leq a\leq b$ and $a,b\geq n\geq1$, contradict. Thus, $S_4\notin I_1$.
\end{itemize}
Since $S_1,S_2,S_3,S_4\notin I_1$, we conclude with $ \nexists C \in I_1\setminus\{O,A,B\}, \ \alpha A \oplus \beta B \equiv C$
\end{proof}
Note: By similar steps, we can prove for $I_2$: Pick any $A,B,C \in I_2\setminus\{E\}$, such that $ A\neq B\neq C\neq A$. Then:\\
    \[              
    \forall \alpha,\beta,\gamma \in \{1,2\}, \alpha A \oplus \beta B\oplus \gamma C \neq 0
    \]

\subsection{Evaluation}
 Moreover, if we remove the design of $O, E, I_2$, only left with $I_1$ the resulting construction is essentially equivalent to alternative version 1 but with a 3-WXLI set, yielding a $[10,6,4]_3$ linear code with a higher code rate. Generally, it would be a $[f'(r),f'(r)-r,4]_3$ linear code, we could call it A2 sparse.
\begin{itemize}
\item[+] Structural Extensibility: Without $I_2$, A2 sparse is easier to raise the minimum Hamming distances, one simply picks index set that is 4+ wise XOR linearly independent.
\item[+] Code rate: The A2 sparse achieves a higher code rate than the A2. When both use $r+2$ redundant trits, A2 sparse can encode a message of length $f(r+2)-(r+2)$, which is greater than the maximum length of $2\cdot f(r)-r$ for the A2.
\item[--] Cost: For the same message, A2 sparse may require longer indices, (e.g., 10111 vs.\ 111), which increases the computational cost of encoding and decoding.
\end{itemize}
\begin{table}[h]
\centering
\scriptsize
\setlength{\tabcolsep}{2pt}
\label{tab:originalcompare}
\begin{tabularx}{\linewidth}{@{}%
  >{\centering\arraybackslash}p{.05\linewidth}%
  >{\raggedright\arraybackslash}p{.10\linewidth}%
  >{\raggedright\arraybackslash}p{.08\linewidth}%
>{\centering\arraybackslash}p{.18\linewidth}%
  >{\raggedright\arraybackslash}p{.10\linewidth}%
  >{\raggedright\arraybackslash}p{.50\linewidth}%
}
\toprule
Section & Name & $[n,k,d]_q$ & General $[n,k,d]_q$  & xEC-yED & Feature \\
\midrule
2.1 & Prototype & $[9,6,3]_3$ & $[3^{r-1},3^{r-1}-r,3]_3$ & SEC-DED & Derive from binary Hamming code. \\
2.3 & General & N/A & $[x^{n-1},x^{n-1}-n,3]_x$ & SEC-DED & General code for all base-prime.\\
4.1 & A1 & $[13,10,3]_3$ & $[\frac{3^{r-1}}{2},\frac{3^{r-1}}{2}-r,3]_3$ & SEC-DED & Ternary Hamming Code \\
4.2 & A2 & $[22,16,4]_3$ & $[2{f'(r)}+2,2{f'(r)}-r,4]_3$ & SEC-TED& Leverages compact indices to enhance computational efficiency. \\
4.3 & A2 sparse & $[10,6,4]_3$ & $[{f'(r)},{f'(r)}-r,4]_3$ & SEC-TED & Concise version with higher code rate. \\
5.1 & Golay & $[11,6,5]_3$ & N/A & DEC-QED & Perfect Code \\
\bottomrule
\end{tabularx}
\caption{Landscape of Constructed Error-Correcting Codes}
\end{table}
\section{Further Direction: Minimum Distance 4+}
Within the framework of A2 sparse code, an n-WXLI set produces a linear code with minimum distance n+1, and hence can correct $\lfloor\frac{n+1}{2}\rfloor$ error.
\subsection{Brief Example: $n = 4$ and the Ternary Golay Code}
For n=4, a 4-wise XOR-linearly independent set gives a code with minimum distance 5. In the Ternary cases, the classical $[11,6,5]_3$ ternary Golay code\cite{Golay1949} can be viewed as arising from exactly such 4-WXLI set, when its parity check matrix columns are taken as elements of the set. Let \( I \) be a \textbf{4-wise XOR linearly independent set (4-XOR-LI set)}, partitioned into message trits \( M \) and redundant trits \( R \):\\
    \[
    I = \big\{ 00001, 00010, 00100, 01000, 10000, 01122, 10212, 12021, 12102, 22110, 22222\big\},
    \]
    \[
    R = \big\{ 00001, 00010, 00100, 01000, 10000 \big\}, \quad M = I \setminus R = \big\{ 01122, 10212, 12021, 12102, 22110, 22222 \big\}.
    \]
And then, we are going to explain explain Golay Code by idea of 4-XOR-LI set.\\
\textbf{Encoding Steps:}
\begin{enumerate}
  \item Map the message \( 012,\!210 \) to trits in \( M \):
    \[
    \big\{ 01122:0,\ 10212:1,\ 12021:2,\ 12102:2,\ 22110:1,\ 22222:0 \big\}.
    \]
  \item Compute the XOR sum of mapped trits:
    \[
    10212 \oplus 21012 \oplus 21201 \oplus 22110 = 11202.
    \]
  \item Set redundant trits in \( R \) to satisfy \( P_{\text{all}} = 0 \)
    \[
  11202 \oplus 22101 = 00000,\qquad 22101 = 00001\oplus00100\oplus(01000\cdot2)\oplus(10000\cdot2)
  \]
    \[
    \big\{ 00001:1, \ 00010:0,\ 00100:1,\ 01000:2,\ 10000:2 \big\}.
    \]
  \item Final code: \( 10,\!122,\!012,\!210 \).
\end{enumerate}

\textbf{Decoding Sample:} Received codeword \( 10,\!122,\!012,\!2\textcolor{red}{22} \), and calculate syndromes:
    \[
    P_{\text{all}} = 00221 = 1 \times 22110 \oplus 2 \times 22222.
    \]
By the 4-XOR-LI property, the unique decomposition identifies errors at \( 22110 \) and \( 22222 \) with $+1$ and $+2$ respectively from original correct values. Correct by subtracting excess values, we have \( 10,\!122,\!012,\!2\textcolor{green}{10} \).

When the message length to be encoded exceeds 6, the same methodology can be applied: from an index space of length $r \geq 6$ (e.g., $\{000000, \cdots, 222222\}$), a larger 4-wise XOR linearly independent set is constructed. Partitioning this set yields the set $R$, thereby resulting in a larger set $M$ capable of accommodating longer messages. Furthermore, the larger the constructed $n$-wise XOR linearly independent set, the higher the code rate of the resulting error-correcting code. This allows its performance to approach that of an optimal perfect code. However, the problem of constructing a maximum-sized $n$-wise XOR linearly independent set under given constraints remains open, pointing to a clear direction for future research.

\section{Summary and Conclusion}
In this paper, we begin with the prototype $[3^r,3^r-(r+1),3]_3$ SEC-DED code, extending the idea of the parity check to the ternary system. Based on this, we construct a universal SEC-DED coding scheme applicable to all prime-number bases and prove its validity. By introducing the concepts of XOR-Modular 3 and inverse pairs, we improve the efficiency of both encoding and decoding processes. We further refine the prototype code to support the encoding of messages of arbitrary length. Meanwhile, we propose two directions for extension: (1) reducing redundant trits to improve the information rate; and (2) selecting an n-wise XOR linearly independent index set to increase the minimum distance, thereby enabling correction of more errors at the cost of reduced information rate. Finally, we demonstrate that an n-WXLI set can be used to construct linear codes with distance n+1. This approach provides a pathway for designing codes with arbitrarily large minimum distances. Notably, the specific case where d=5 corresponds to the ternary Golay code, which strongly validates our methodology.

Our findings suggest that the idea of parity checks maintains strong performance in ternary systems, leveraging properties of pairs that are not present in binary systems. The techniques developed herein pave the way for further research into the application of modular arithmetic principles and the exploration of alternative encoding strategies. By pushing the boundaries of traditional coding theory, we hope to inspire ongoing inquiry into the realm of non-binary codes and their applications in modern communication technologies.

\section{Supplemental Materials}
% ============================================================
\section*{Appendix A: Tightness via a Weight-3 Codeword (Prototype)}
We show that the ternary prototype code of length $27$ with slice-parity checks plus a global-sum check has minimum distance exactly $3$ by explicitly constructing a weight-$3$ codeword.

\begin{lemma}[Weight-3 Codeword for the Ternary Prototype]
\hspace{0pt}\\
Consider the index set $\{0,\dots,26\}$ identified with base-$3$ vectors $\mathbf{i}=(i_2,i_1,i_0)\in\{0,1,2\}^3$. 
Pick any base-$3$ vector $\mathbf{u}\in\{0,1,2\}^3$. Define three positions
\[
  \mathbf{a}=\mathbf{u},\qquad 
  \mathbf{b}=\mathbf{u}+\mathbf{1}\pmod 3,\qquad 
  \mathbf{c}=\mathbf{u}+\mathbf{2}\pmod 3,
\]
where $\mathbf{1}=(1,1,1)$ and addition is component-wise mod~$3$. Let the symbol values at these three positions be $(1,1,1)\in\F_3^3$, and all other positions be $0$. Then the resulting length-$27$ vector is a nonzero codeword of weight $3$.
\end{lemma}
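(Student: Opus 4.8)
The plan is to reduce the claim to a direct verification that the constructed length-$27$ vector satisfies every parity constraint defining the prototype code, and then to confirm that its support has size exactly three. First I would recall, from the XOR reformulation in Section~3.2, that the prototype code is exactly the set of vectors $v$ for which the global sum $\sum_i v_i \equiv 0 \pmod 3$ and the XOR-product check $\bigoplus_i v_i\cdot(i)_3 = \mathbf{0}$ both hold. Reading the latter coordinate by coordinate, the XOR-product check is equivalent to the three slice-parity conditions $\sum_i v_i\, i_j \equiv 0 \pmod 3$ for each digit position $j\in\{0,1,2\}$, where $i_j$ is the $j$-th base-$3$ digit of the index $i$. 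So there are four scalar congruences to check.

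Next I would substitute the candidate vector, which equals $1$ at the three positions $\mathbf{a}=\mathbf{u}$, $\mathbf{b}=\mathbf{u}+\mathbf{1}$, $\mathbf{c}=\mathbf{u}+\mathbf{2}$ and $0$ elsewhere. The global-sum check is immediate, since $1+1+1 = 3 \equiv 0 \pmod 3$. For each slice check $\sum_i v_i\, i_j$, only the three nonzero positions contribute, so it reduces to $a_j + b_j + c_j \pmod 3$. The key observation---really the only idea in the proof---is that in each coordinate $j$ the three digits are $u_j,\ u_j+1,\ u_j+2 \pmod 3$, which form a complete residue system modulo $3$; hence $a_j+b_j+c_j \equiv 0+1+2 \equiv 0 \pmod 3$ independently of $\mathbf{u}$. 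This settles all three slice checks at once, so the vector lies in the code.

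Finally I would verify the weight. The vector is nonzero because it takes the value $1$ somewhere. To see that the weight is exactly $3$, I would check that $\mathbf{a},\mathbf{b},\mathbf{c}$ are pairwise distinct: their pairwise differences are $\mathbf{1}$, $\mathbf{2}$, and $\mathbf{1}$ modulo $3$, all nonzero vectors in $\F_3^3$, so no two positions coincide and the support has size three.

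There is no genuine obstacle here; the statement is a short explicit construction. If anything, the only step requiring care is the translation between the original slice-parity phrasing of Section~2.1 and the coordinatewise form of the XOR-product check, so that one is certain the four congruences above really do define the prototype code. Once that identification is in hand, the complete-residue-system remark does all the work, and the distinctness check rules out any weight drop below three. Since a single-error-correcting code can contain no codeword of weight $1$ or $2$, this weight-$3$ codeword witnesses that the minimum distance is exactly $3$, which is the point of the appendix.
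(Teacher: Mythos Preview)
Your proposal is correct and follows essentially the same line as the paper's proof: both reduce to the observation that in each digit coordinate $j$ the three indices contribute the full residue system $\{0,1,2\}$, so every slice parity vanishes, while $1+1+1\equiv 0$ handles the global sum. Your explicit check that $\mathbf{a},\mathbf{b},\mathbf{c}$ are pairwise distinct is a small addition over the paper, which simply asserts the weight is three, but otherwise the arguments are the same.
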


\begin{proof}
\hspace{0pt}\\
By construction, for each digit position $j\in\{0,1,2\}$ the triple of digits across $\mathbf{a},\mathbf{b},\mathbf{c}$ is a permutation of $(0,1,2)$, hence the component-wise sum at position $j$ equals $0\pmod 3$. Therefore, every ``slice'' parity that sums digit-$j$ contributions (with any fixed nonzero scalar weight in $\F_3$) vanishes. 
The global-sum parity also vanishes because the three nonzero symbols sum to $1+1+1\equiv 0\pmod 3$. 
The vector is nonzero and has exactly three nonzero entries, so it is a codeword of weight $3$.
\end{proof}

Consequently, the code has $d_{\min}\le 3$. Since standard arguments for this construction rule out weight-$1$ and weight-$2$ codewords, we conclude $d_{\min}=3$.
\hfill$\square$

% ============================================================
\section*{Appendix B: Decoding Pseudocode}
We record clear procedural pseudocode for the single-error-correcting (SEC) decoder of the prototype and a detect-$\ge2$ extension (SEC--TED) for the distance-$4$ A2 Sparse code.

\subsection*{B.1 SEC Decoder for the Prototype \([27,23,3]_3\)}
\begin{algorithm}[H]
  \DontPrintSemicolon
  \caption{SEC decoding (ternary prototype)}
  \KwIn{Received word $r\in\F_3^{27}$.}
  \KwOut{Estimated codeword $\hat{c}$.}
  Compute slice syndromes $S_0,S_1,S_2\in\F_3$ according to the three digit-wise parity checks.\;
  Compute global-sum syndrome $S_\mathrm{all}\in\F_3$.\;
  \uIf{$S_0=S_1=S_2=S_\mathrm{all}=0$}{
    \Return $\hat{c}=r$ \tcp*[r]{no error}
  }
  \Else{
    Let $\hat{\mathbf{i}}=(\hat{i}_2,\hat{i}_1,\hat{i}_0)\gets f_\mathrm{locate}(S_2,S_1,S_0)\in\{0,1,2\}^3$.    \tcp*[r]{Locate error from slice syndromes}
    Set $\hat{e} \gets S_\mathrm{all}$ (the symbol error in $\F_3$).\tcp*[r]{Infer error magnitude from the global sum}
    Flip $r$ at position $\hat{\mathbf{i}}$ by $-\hat{e}$ to obtain $\hat{c}$.\tcp*[r]{Correct the erroneous trit}
    \Return $\hat{c}$.\;
  }
\end{algorithm}

\noindent\emph{Note.} The locator $f_\mathrm{locate}$ is determined by the parity-check matrix: the triple $(S_2,S_1,S_0)$ represents the base-$3$ index of the single error; in algebraic terms, it is the column of $H$ hit by the error.

\subsection*{B.2 SEC--TED Decoder for the A2 Sparse Code}
\begin{algorithm}[H]
  \DontPrintSemicolon
  \caption{SEC--TED decoding (A2 Sparse Code)}
  \KwIn{Received word $r\in\F_3^{n}$.}
  \KwOut{Received word $\hat{c}$ or \textbf{DetectedMultipleErrors}.}
  Compute $S_\mathrm{idxXor}\in\F_3^{n},S_\mathrm{value}\in\F_3$\;
  \uIf{$S_\mathrm{idxXor}=S_\mathrm{value} = 0$}{
    \Return $\hat{c}=r$ \tcp*[r]{no error}
  }
  \uElseIf{$S_\mathrm{idxXor}\neq0$}{
    \uIf{$S_\mathrm{loc} = S_\mathrm{value}\cdot S_\mathrm{idxXor} \in I\setminus\{0\}$}{    
        Let $\hat{\mathbf{i}}=(\hat{i}_{n-1},\cdots,\hat{i}_1,\hat{i}_0)\gets f_\mathrm{locate}(S_\mathrm{loc})\in\{0,1,2\}^n$ \tcp*[r]{Locate error from slice syndromes}             
        Set $\hat{e} \gets S_\mathrm{value}$ (the symbol error in $\F_3$) \tcp*[r]{Infer error magnitude from the global sum}
        Flip $r$ at position $\hat{\mathbf{i}}$ by $-\hat{e}$ to obtain $\hat{c}$ \tcp*[r]{Correct the erroneous trit}
        \Return $\hat{c}$
    }
    \Else{
        \Return \textbf{DetectedMultipleErrors}
    }
  }
  \Else{
    Let $\hat{\mathbf{i}}=(\hat{i}_{n-1},\cdots,\hat{i}_1,\hat{i}_0) = (0,\cdots,0,0)$ \tcp*[r]{Error must located $0\cdots0$ in this case}
    Set $\hat{e} \gets S_\mathrm{value}$ \;
    Flip $r$ at position $\hat{\mathbf{i}}$ by $-\hat{e}$ to obtain $\hat{c}$ \;
    \Return $\hat{c}$
  }
\end{algorithm}
\noindent\emph{Note.} Define $S_{idxXor} = \hat{i}_{n-1}\cdots\hat{i}_1\hat{i}_0$, then locator function $f_\mathrm{locate}(S_{idxXor}) = (\hat{i}_{n-1},\cdots,\hat{i}_1,\hat{i}_0)$.

% ============================================================
\section*{Appendix C: References (Background and Context)}
%For background and positioning with respect to classical constructions:
\subsection*{C.1 A1 \& A2 Code Interactive Demo}
Access: \url{https://sltracer.github.io/ECC_Paper_Website_Demo/index_SEC_TED_en.html}
\renewcommand{\refname}{C.2 References List}


\begin{thebibliography}{99}
\bibitem{Hamming1950} R.\,W.~Hamming, “Error Detecting and Error Correcting Codes,” \emph{Bell System Technical Journal}, vol.~29, no.~2, pp.~147--160, 1950.
\bibitem{MacWilliamsSloane}
  F.\,J.~MacWilliams and N.\,J.\,A.~Sloane, \emph{The Theory of Error-Correcting Codes}. North-Holland, 1977.
\bibitem{HuffmanPless}
  W.\,C.~Huffman and V.~Pless, \emph{Fundamentals of Error-Correcting Codes}. Cambridge Univ. Press, 2003.
\bibitem{LinCostello}
  S.~Lin and D.\,J.~Costello, \emph{Error Control Coding}, 2nd ed. Pearson, 2004.
\bibitem{Blahut}
  R.\,E.~Blahut, \emph{Algebraic Codes for Data Transmission}. Cambridge Univ. Press, 2003.
\bibitem{vanLint}
  J.\,H.~van Lint, \emph{Introduction to Coding Theory}, 3rd ed. Springer, 1999.
\bibitem{EliasProduct}
  P.~Elias, ``Error-correcting codes for list decoding,'' \emph{IEEE Trans. Inf. Theory}, vol.~37, no.~1, 1991. % product code related history
\bibitem{ForneyConcatenated}
  G.\,D.~Forney, Jr., ``Concatenated Codes,'' MIT Press, 1966.
\bibitem{Golay1949}
  M.\,J.\,E.~Golay, ``Notes on digital coding,'' \emph{Proc. IRE}, vol.~37, p.~657, Jun. 1949.
\end{thebibliography}
\end{document}